\def \globalscale {0.280000}
\colorlet{color1}{Purple3}
\colorlet{color2}{DeepSkyBlue4}
\colorlet{color3}{Yellow4}
\newcommand{\vqwa}[1]{
	\edef\start{\the\pgfmatrixcurrentrow-\the\pgfmatrixcurrentcolumn}
	\edef\end{\the\numexpr#1+\pgfmatrixcurrentrow\relax-\the\pgfmatrixcurrentcolumn}
\expandafter\expandafter\expandafter\vqwexplicita\expandafter\expandafter\expandafter{\expandafter\start\expandafter}\expandafter{\end}
}
\newcommand{\vqwexplicita}[2]{
\arrow[from=#1,to=#2,arrowsa] {}
}
\newcommand{\vqwb}[1]{
\edef\start{\the\pgfmatrixcurrentrow-\the\pgfmatrixcurrentcolumn}
\edef\end{\the\numexpr#1+\pgfmatrixcurrentrow\relax-\the\pgfmatrixcurrentcolumn}
\expandafter\expandafter\expandafter\vqwexplicitb\expandafter\expandafter\expandafter{\expandafter\start\expandafter}\expandafter{\end}
}
\newcommand{\vqwexplicitb}[2]{
\arrow[from=#1,to=#2,arrowsb] {}
}
\newcommand{\vqwc}[1]{
\edef\start{\the\pgfmatrixcurrentrow-\the\pgfmatrixcurrentcolumn}
\edef\end{\the\numexpr#1+\pgfmatrixcurrentrow\relax-\the\pgfmatrixcurrentcolumn}
\expandafter\expandafter\expandafter\vqwexplicitca\expandafter\expandafter\expandafter{\expandafter\start\expandafter}\expandafter{\end}
}
\newcommand{\vqwexplicitca}[2]{
\arrow[from=#1,to=#2,arrowsc] {}
}
\newcommand{\qwa}{\ifthenelse{\the\pgfmatrixcurrentcolumn>1}{\arrow[arrowsa]{l}}{}}
\newcommand{\qwb}{\ifthenelse{\the\pgfmatrixcurrentcolumn>1}{\arrow[arrowsb]{l}}{}}
\newcommand{\qwc}{\ifthenelse{\the\pgfmatrixcurrentcolumn>1}{\arrow[arrowsc]{l}}{}}
\tikzset{internal/.style={thickness}}
\def\ctrla#1{\control[fill=color1]{}	\vqwa{#1}}
\def\ctrlb#1{\control[fill=color2]{}	\vqwb{#1}}
\def\ctrlc#1{\control[fill=color3]{}	\vqwc{#1}}
\newcommand{\phasea}{\phase[fill=color1]}
\newcommand{\phaseb}{\phase[fill=color2]}
\newcommand{\phasec}{\phase[fill=color3]}
\newcommand{\targa}{\targ[draw={color1}]{}}
\newcommand{\targb}{\targ[draw={color2}]{}}
\newcommand{\targc}{\targ[draw={color3}]{}}
\DeclareMathOperator*{\argmax}{arg\,max}
\DeclareMathOperator{\tr}{tr}
\def\1{\mathds{1}}
\newcommand{\idealmeasurement}{s}
\newcommand{\expectationinsyndrome}{\widetilde{\langle s\rangle}}
\newcommand{\expectationforestimation}{\langle \idealmeasurement\rangle}
\newcommand{\firstexpectationvalue}{\widetilde{\langle s \rangle}_{\mathrm{cal.}}^{(1)}}
\newcommand{\secondexpectationvalue}{\widetilde{\langle s \rangle}_{\mathrm{cal.}}^{(2)}}
\newcommand{\expectationcalibrationideal}{\langle \idealmeasurement \rangle_{\mathrm{cal.}}}
\newcommand{\thestabilizergroup}{\mathcal{S}}
\newcommand{\calibrationinitialstate}{\ket{\psi}}
\newcommand{\bC}{\mathbb C}
\newcommand{\bF}{\mathbb F}
\newcommand{\cE}{\mathcal E}
\newcommand{\cB}{\mathcal B}
\newcommand{\cF}{\mathcal F}
\newcommand{\expvalideal}[2]{\langle #1\rangle_{#2}}
\newcommand{\expvalnoisy}[2]{\widetilde{\langle #1\rangle}_{#2}}
\newcommand{\linstab}{\bC\{\widehat S\}}
\newcommand{\ftwo}{\mathbb F_2}
\newcommand{\blank}{{-}}
\DeclareMathOperator{\Id}{Id}
\DeclareMathOperator{\anc}{anc}
\DeclareMathOperator{\ev}{ev}
\DeclareMathOperator{\avg}{avg}
\DeclareMathOperator{\Syn}{Syn}
\theoremstyle{definition}
\newtheorem{definition}{Definition}
\newtheorem{notation}[definition]{Notation}
\newtheorem{example}[definition]{Example}
\theoremstyle{plain}
\newtheorem{thm}[definition]{Theorem}
\newtheorem{prop}[definition]{Proposition}
\newtheorem{lemma}[definition]{Lemma}
\newtheorem{cor}[definition]{Corollary}
\theoremstyle{remark}
\newtheorem{rem}[definition]{Remark}
\begin{document}
\title{Calibration of Syndrome Measurements in a Single Experiment}
\author{Christian Wimmer}
\author{Jochen Szangolies}
\author{Michael Epping}
\email{Michael.Epping@dlr.de}
\affiliation{Institute for Software Technology, German Aerospace Center (DLR), Cologne, Germany}
\date{\today}
\begin{abstract}
		Quantum error correction can reduce the effects of noise in quantum systems, e.g. in metrology or most notably in quantum computing. Typically, this requires making measurements that provide information about the errors that have occurred in the system.
		However, these syndrome measurements themselves introduce noise into the system, for example by using noisy gates.
		A complete characterization of the measurements is very costly.
		Here we describe a calibration method to obtain the syndrome statistics taking into account the additional noise sources.
		All calibration data are extracted from a single experiment in which the syndrome measurement is performed twice in a row.
		Thus, our method allows an accurate evaluation of syndrome measurements with significantly less effort than existing methods. We give examples of the application of this method to noise estimation and error correction. Finally, we discuss the results of experiments performed on an IBM quantum computer.
\end{abstract}
\maketitle

\section{Introduction}
\label{sec:introduction}
Current quantum computing hardware is approaching a level of maturity where it begins to benefit from the techniques developed in quantum error correction~\cite{Schindler2011,Negnevitsky2018}.
These also have broad applications beyond quantum computing, e.g. in quantum communication~\cite{Ekert1996} and metrology~\cite{Duer2014, Kessler2014}.
A particularly well-studied method is stabilizer codes with syndrome measurements, which provide the information necessary to correct errors~\cite{gottesman1997stabilizer, Shor95, Calderbank96,Steane96}.
However, the measurements usually require demanding multi-qubit gates and themselves introduce noise into the system, partly invalidating the obtained information. 
Therefore, it is of great interest to explore techniques that mitigate such influences, e.g. via machine learning methods~\cite{convy2022machine} or by using an estimate of the error rate, obtained via the syndrome measurements themselves, to inform the correction operation~\cite{huo2017learning}.
Along similar lines, we present a technique to reduce deleterious measurement effects without modifying the circuit using only one additional calibration experiment, 
which involves repeating the measurement twice.
By single experiment we mean a single input state, a single circuit and a single measurement setting which is repeated sufficiently often to estimate the outcome statistics with the desired accuracy.
One inspiration for our approach is the method of Wagner \textit{et. al.} \cite{wagner2022pauli}, where the authors also only use information that is already measured in the context of quantum error correction.

We make the following assumptions. 
The noise introduced by the measurement circuit can be described by Pauli noise and is independent of where the measurement is used inside a larger circuit.
Note that this model covers various types of noise sources. 
For example, it includes initialization errors, coherent and incoherent single and multi-qubit gate errors, cross-talk, readout errors, correlated multi-qubit errors, noise on idle qubits, and error propagation within the circuit.
Any ancillary qubits used in the stabilizer measurement may be subject to the same kind of noise.
Also note that Pauli noise can be enforced by randomized compiling~\cite{Wallman16}, so it is not a strong limitation.

Our main result is that under the described assumptions it is possible to exactly calculate the expectation value of any stabilizer element after the noisy measurement was performed. 
We consider two different cases: 
Averaging over the measurement outcomes or conditioning on individual results.

\begin{figure*}
	\begin{quantikz}
		& \gate{\varepsilon} \qwbundle[alternate]{} &   \meter{Syndrome}\arrow[d]\qwbundle[alternate]{} &\qwbundle[alternate]{}\slice[style={line width=0.5mm}]{} &  \qwbundle[alternate]{}&\gate{\text{corr.}} \qwbundle[alternate]{} & \qwbundle[alternate]{}\\
		& &  \widetilde{\langle s \rangle}\arrow[rr,"\alpha_s",shorten=1pt] & &\vphantom{ \widetilde{\langle s \rangle}}\langle s \rangle & &
	\end{quantikz}		
		\caption{The general scenario in which we are interested: A noise channel $\varepsilon$ is followed by a noisy syndrome measurement which informs the correction operation (corr.). We calculate the expectation values $\langle s \rangle$ for stabilizer elements $s$, right before the correction operation (dashed line), via the correction factor $\alpha_s$, see Eq.~(\ref{eq:correctedexpectationvalue}). This allows estimating the accumulated noise at that position, which is the relevant noise for the error correction.}\label{fig:scenario}
\end{figure*}

In principle process tomography allows to fully characterize the noisy measurement~\cite{dariano2003quantum, Dariano04, Artiles05}. 
However, it involves preparing several different input states and performing multiple additional measurements. 
The tomography is further complicated by consistency issues, see e.g.~\cite{Merkel2013,Cattaneo2023}.
Moreover, these methods produce a more detailed description than necessary in the context of quantum error correction. 

As mentioned above, syndrome measurements do not only involve basic measurements but also gates.
When it comes to characterizing the noise of gates, randomized benchmarking has proven to be very successful~\cite{Emerson_2005}.
This refers to various protocols that are intended to determine the quality of gates as independently as possible of preparation and measurement errors.
It quantifies the averaged error over a set of quantum gates by performing random circuits that are assembled in such a way that overall the identity would be performed if the gates were perfect.
However, when applied to our scenario, only partial information about the syndrome measurement is obtained.

Several known techniques can partially remove noise from a measurement result of short circuits. These approaches are called quantum error mitigation (QEM), and they aim to reproduce the results of noiseless circuits without quantum error correction codes via post-processing of the data \cite{cai2023quantum,endo2019a}.
Zero noise extrapolation~\cite{Temme2017} is an error mitigation method in which a circuit is executed several times with gates of varying degrees of noise. The noise is artificially influenced, e.g. by varying the execution time of the gates or by extending the circuit. The behavior of a noise-free circuit is then extrapolated from the data points obtained.
Readout error mitigation \cite{Tannu2019,beisel2022} is a method that aims to remove measurement errors from the measurement data. A matrix is set up containing the probabilities that one observes a measurement result x instead of the correct one y. The basic idea of the approach is then to apply the inverse of this matrix to improve the statistics.

While we use readout error mitigation for reference data in our experiment, we emphasize that the goal of our proposed method is not to remove the noise from the measured data but on the contrary to take the noise introduced by the measurement into account. 
Furthermore, as will become clear below, the computational complexity of our proposed correction only scales (linearly) with the size $2^m$ of the stabilizer of the error correction code.
In particular it does not scale with the number of physical or logical qubits in the quantum computer.
Thus, our method is also interesting beyond the noisy intermediate-scale quantum (NISQ) era.

In the context of quantum error correction currently the most common approach to deal with noisy syndrome measurements is to include additional redundancies.
This can be done by introducing redundant stabilizer measurements to encode the syndrome using a classical error correction code.
One can then correct errors on the syndrome. 
The concatenation of one code for the data and one for the syndrome is called a data-syndrome code~\cite{Ashikhmin2019}.
It is also common to repeat the syndrome measurement an appropriate number of times and analyze which detected errors may be artifacts of a faulty readout~\cite{Dennis2002}.
Both of these strategies have a higher resource overhead than our proposed method and do not consider calibration data to estimate the effect of propagating errors.
However, a combination of our method with these state-of-the-art approaches is possible.

This paper is organized as follows. 
We introduce our calibration method for syndrome measurements in Section~\ref{sec:calibration}. 
We then demonstrate how to apply it to estimate the noise including that introduced by measurement in Section~\ref{sec:noiseestimation}
and to inform a correction operation in Section~\ref{sec:depolarizing_noise}.
The Section~\ref{sec:experiment} contains a summary of results of running our method on IBM quantum computing hardware.
We conclude in Section~\ref{sec:conclusion}.

\section{Calibrated syndrome measurements}
\label{sec:calibration}

We consider the following, typical scenario for quantum error correction, see also Fig.~\ref{fig:scenario}: Syndrome measurements are performed to identify the effect of noise on encoded data and then apply appropriate corrections.
However, when performing syndrome measurements on noisy devices like the ones available today, they will introduce errors themselves.
These need to be accounted for in the correction operation.
Our work therefore aims to answer the following question:  What would be the expectation values of ideal syndrome measurements following noisy ones?
	Our proposed calibration protocol contains the following steps.
	\begin{enumerate}
		\item A calibration experiment, separate from the intended use of the syndrome measurement in the scenario described above, is carried out.
		It consists of performing the exact same syndrome measurement twice in row on the same quantum system, see Fig.~\ref{fig:calibration}.
		\item Correction factors for the expectation values are calculated from the measurement statistics of the calibration experiment.
		\item The calibration data is used to calculate ideal expectation values immediately after each occurrence of the noisy syndrome measurement in the original scenario.
	\end{enumerate}
Later on, we will give a more detailed description of steps 1 and 2 in Algorithm~\ref{alg:calibration} and step 3 in Algorithms~\ref{alg:correctionaverage} and \ref{alg:correction}.

\begin{figure}
	\begin{quantikz}
		\lstick{\calibrationinitialstate} & \qwbundle[alternate]{} &  \meter{Syndrome}\arrow[d,"\gamma_s"]\arrow[r,bend left,shorten=1pt, "\beta_s"]\qwbundle[alternate]{} &\qwbundle[alternate]{}&[1cm] \meter{Syndrome}\arrow[d,"\gamma_s"] \qwbundle[alternate]{}& \qwbundle[alternate]{}\\
		&\slice[label style={anchor=south west,rotate=40}]{$\expectationcalibrationideal$}	& \firstexpectationvalue\arrow[rr, "\gamma_s^{-1}\beta_s\gamma_s=\beta_s"]\arrow[ru,"\alpha_s",shorten=3pt] &\slice[label style={anchor=south west,rotate=40}]{$\expectationforestimation$} & \secondexpectationvalue & 
	\end{quantikz}	
	
	\caption{The shown calibration measurement is performed to characterize the noise introduced by the syndrome measurement, see Fig.~\ref{fig:scenario}, by $\alpha_s=\frac{\beta_s}{\gamma_s}$. The arrow labels indicate multiplicative factors to the expectation value. The ideal expectation value at the first dashed line is a known quantity. The first measured expectation value $\firstexpectationvalue$ allows to estimate $\gamma_s$, while $\firstexpectationvalue$ and $\secondexpectationvalue$ together allow to estimate $\beta_s$.}\label{fig:calibration}
\end{figure}

Before introducing our calibration method, let us first introduce some key concepts and notation.
More details can be found e.g. in \cite{nielsenchuang}.
We denote the Pauli matrices with $X$, $Y$, and $Z$.
The Pauli matrices with the usual multiplication generate a group $P_1$, which can be extended to $P_n$ on $n$ qubits using the tensor product. 
A Pauli operator $s\in P_n$ stabilizes a state $\ket{\psi}$ if $s \ket{\psi} = \ket{\psi}$.
The set of all stabilizers of a subspace of the state space form a subgroup of the Pauli group: the stabilizer group of this subspace.
It can be used to encode logical information distributed over several qubits.
In this context we call the subspace the code space of a stabilizer code.
The outcomes of a measurement of a generating set of the stabilizer group determine whether a state lies within the code space. 
They are called the syndrome and used to inform the correction operation. 
We denote with $[[n,k,d]]$ a stabilizer code~\cite{gottesman1997stabilizer}, where $n$ is the number of physical qubits, $k$ the number of logical qubits, and $d$ the code distance, which can correct $\lfloor \frac{d-1}{2} \rfloor$ single qubit errors.

We assume that this noise is independent of its position in a circuit. 
While this assumption is necessary for any kind of calibration, 
real devices will deviate from it to varying degrees due to cross-talk and other imperfections. For the sake of clarity, we do not include finite sampling effects here.
Furthermore, as motivated in the introduction, we assume that the noise can be described by a Pauli channel, i.e. by randomly occurring Pauli errors.
Concretely, a faulty measurement can be modeled as an ideal circuit followed by an error channel on data as well as ancilla qubits, and finally an ideal measurement in the computational basis.
We note that any circuit consisting of Clifford gates (which are gates that map Pauli operators to Pauli operators), each of which are affected by individual error channels, fits into this model. We refer the interested reader to Appendix~\ref{app:details} for a detailed mathematical exposition. More pragmatically, Eqs.~(\ref{eq:post_measurement_state}) and (\ref{eq:faulty_probability}) below may simply be taken as a definition of the behavior of faulty measurements.

We start by introducing the correction for the expectation value of stabilizer elements which is independent of the measurement outcome.
The expectation value of a stabilizer element conditioned on a given outcome will be described afterwards.

\subsection*{Calibration and correction factors for expectation values}

Consider a Pauli error $e\in P_n$ introduced into the circuit by the noisy syndrome measurement. As mentioned above, according to our noise model this error occurs between the ideal implementation of the circuit and a measurement in the computational basis.
With respect to a given stabilizer element $s$, there are two effects that the error $e$ can have:
\begin{itemize}
	\item $e$ can anti-commute with $s$ after the syndrome measurement circuit. In this case we say that $e$ propagates as an error w.r.t. $s$. That is $e$ flips any later measurement of $s$.
	\item $e$ can anti-commute with the observable of the computational basis measurement which is used to measure $s$. In this case $e$ flips the outcome of the present measurement of $s$.
\end{itemize}
Let $p_s$ and $q_s$ be the probability for the first and second effect, respectively, when drawing random Pauli errors according to the Pauli channel describing the noise.
If an anticommuting error occurs with probability $p$, then the expectation value of $s$ is damped by a factor of $1-2 p$.
Let $\beta_s$ and $\gamma_s$ be these factors associated with errors of the first and second category, respectively.

Throughout our article we use $\langle s \rangle$ and $\widetilde{\langle s \rangle}$ to denote the expectation value of a stabilizer element $s$ when calculated from ideal and noisy measurement outcomes, respectively.
From the calibration experiment, see Fig.~\ref{fig:calibration}, the expected values $\firstexpectationvalue$ and $\secondexpectationvalue$ of the first and second noisy measurement, respectively, are calculated.
By considering which errors affect which outcomes, see also Fig.~\ref{fig:calibration}, we get that $\firstexpectationvalue = \gamma_s \expectationcalibrationideal$ and $\secondexpectationvalue = \beta_s \gamma_s \expectationcalibrationideal$.
Thus, independently of the input state, the two error processes described above introduce factors 
\begin{alignat}{2}
	\beta_s:=& 1- 2 p_s = \frac{\secondexpectationvalue}{\firstexpectationvalue}
\text{ and }	\gamma_s := 1 - 2 q_s = &\frac{\firstexpectationvalue}{\expectationcalibrationideal}\label{eq:betagamma}
\end{alignat}
to the expectation value of $s$.
Indeed, the same factors apply outside the calibration experiment whenever using the same syndrome measurement. 
As mentioned above, in this article we neglect the effects of the finite sample size in an experiment
and assume for simplicity that the obtained outcome frequencies yield the expectation values up to sufficient accuracy. 

The expectation value $\expectationcalibrationideal$ of the initial state $\calibrationinitialstate$ of the calibration measurement can be calculated
because we assume that this state can be prepared perfectly. 
Evaluating Eq.~(\ref{eq:betagamma}) requires $\expectationcalibrationideal\neq 0$, 
which can be achieved, for example, by using a code word as the initial state.
Another possible choice is the product state
\begin{equation}
	\calibrationinitialstate :=  \left(\sqrt{\frac{1}{6}(3+\sqrt{3})}\ket{0} + e^{i\frac{\pi}{4}} \sqrt{\frac{1}{6}(3-\sqrt{3})}\ket{1}\right)^{\otimes n}. \label{eq:productstate}
\end{equation}
We remark that $\calibrationinitialstate$ cannot be written in the form $\cos\left(\frac{p}{q}\pi\right) \ket{0} + e^{i \frac{\pi}{4}} \sin\left(\frac{p}{q}\pi\right)\ket{1}$ with $p,q\in\mathds{Z}$, see \cite{Bogosel2018}.
Let $w=\mathrm{wt}(s)$ be the weight of a stabilizer element $s$, i.e. the number of tensor factors which are not $\1$.
Then for a stabilizer element $s$ with no phase and weight $w=\mathrm{wt}(s)$,
this choice of the initial state leads to the expectation value
\begin{equation}
	\expectationcalibrationideal = \bra{\psi} s \ket{\psi} 
	= \frac{1}{\sqrt{3}^w}.
\end{equation}
Syndrome measurements determine our subsequent correction. 
However, since the syndrome measurement itself introduces noise, this information differs before the measurement and before the correction.
We are interested in the latter, more precisely
in the expectation value $\langle s \rangle$ of an imagined, ideal stabilizer measurement just before the correction operation.
Thus, we calculate
\begin{equation}
	\alpha_s := \frac{\beta_s}{\gamma_s} \label{eq:alpha}
\end{equation}
and use the expectation values
\begin{equation}
	 \expectationforestimation = \alpha_s \expectationinsyndrome. \label{eq:correctedexpectationvalue}
\end{equation}
We emphasize again that under the given assumptions Eq.~(\ref{eq:correctedexpectationvalue}) gives the exact value of the expectation value of $s$ after the noisy measurement.
In Section~\ref{sec:examples}, we present concrete examples where the use of $\expectationforestimation$ and derived quantities is advantageous compared to $\expectationinsyndrome$.

However, here we ``forgot'' the outcome of the syndrome measurement.
For the syndrome-based error correction it will be prudent to condition on an observed syndrome outcome $x$.
In the remainder of this section we adopt the above procedure accordingly.
That is, we are going to calculate $\langle s\rangle_{\rho_x}$, where $\rho_x$ is the density matrix of the state given $x$. 

We choose generators $S_1, S_2, \ldots, S_m$, where $m=n-k$, of the stabilizer group $\thestabilizergroup=\langle S_1, S_2,\ldots, S_m\rangle$ and denote the $a$-th stabilizer element by
\begin{equation}
	S(a) \coloneqq S_1^{a_1} S_2^{a_2}\cdots S_m^{a_m}, \label{eq:Sofa}
\end{equation}
$a_i$ being the $i$-th binary digit of $a$.
By a slight abuse of notation we will also identify $a\in\{0,1,\ldots,2^m-1\}$ with the tuple of its binary digits depending on the context. Then we write $a \oplus b$ and $a \cdot b$ for the component-wise addition and the inner product, respectively, modulo two, of these binary tuples.

Let $P(e,u)$ be the joint probability for Pauli errors $e$ on the system and classical bit errors $u\in\{0,1\}^m$ on the syndrome measurement result. In the following formulas the variable $e$ will always be a Pauli operator and $a,x,u\in\{0,1\}^m$. Unless otherwise mentioned, all sums will implicitly run over the full range. We will also use the letter $P$ for the probabilities obtained by averaging out one type of error:
\begin{equation}
	P(e)\coloneqq\sum_u P(e,u)\quad\text{and}\quad	P(u)\coloneqq\sum_e P(e,u).
\end{equation}

Furthermore, $p(x|\rho)=\tr(\rho \pi_x)$ is the ideal probability of the syndrome $x$ in the state $\rho$,
where $\pi_x = \prod_i(\1+(-1)^{x_i} S_i)/2$ is the corresponding projection.
Given input $\rho$ the post-measurement state takes the form
\begin{alignat}{2}
	\rho_x =& \frac{1}{\tilde{p}(x|\rho)}\sum_{e, u} P(e,u) e\pi_{x\oplus u}\rho\pi_{x\oplus u}e^\dagger,\label{eq:post_measurement_state}
\end{alignat}
where
\begin{equation}
	\tilde{p}(x|\rho) = \sum_u P(u) p(u\oplus x\vert\rho)\label{eq:faulty_probability}
\end{equation}
is the probability of obtaining $x$ (cf.\ Proposition~\ref{prop:probability_expectedvalues}, Remark~\ref{rem:qubits}).

We can calculate the expectation value of $S(a)$ in the state $\rho_x$, even from the noisy measurement result:
\begin{alignat}{2}
	\langle S(a)\rangle_{\rho_x}
	\overset{\mathrm{Eq.}~(\ref{eq:post_measurement_state})}{=}&\frac{1}{\tilde{p}(x|\rho)}\sum_{e, u} P(e,u) \tr\left(\pi_{x\oplus u}\rho\pi_{x\oplus u}e^\dagger S(a)e\right)\nonumber\\
	=&\frac{1}{\tilde{p}(x|\rho)}\sum_u (-1)^{a\cdot(x\oplus u)} p(x\oplus u|\rho)\beta_{S(a),u}, \label{eq:expectationvaluegivenoutcome}
\end{alignat}
where in the second equality we introduced 
\begin{equation}
	\beta_{s,u}\coloneqq\sum_e P(e,u)(-1)^{(s,e)} \label{eq:betaSau_def},
\end{equation}
with
\begin{equation}	
	(e,f) =\begin{cases} 1 & \text{if $e$ and $f$ commute,}\\ 0 & \text{otherwise,} \end{cases}
\end{equation}
for two Pauli operators $e$ and $f$,
and used the fact that $(-1)^{a\cdot (x\oplus u)}$ is the outcome of a stabilizer $S(a)$ given a syndrome $x\oplus u$.
See Theorem~\ref{thm:expected_values} in the supplemental material for a more formal presentation.

All quantities in Eq.~(\ref{eq:expectationvaluegivenoutcome}) can be calculated from experimental data as follows.
The probability distribution of the noisy measurements $\tilde{p}(x|\rho)$ is directly obtained from the experiment.
The probability distribution $p(x|\rho)$ can be calculated as the Fourier transform $\mathcal{F}[.]$ of the pre-measurement expectation values, which are obtained from $\widetilde{\langle S(a)\rangle_\rho}$ with the help of $\gamma_{S(a)}$ (see also Corollary~\ref{cor:inverse_fourier}, Remark~\ref{rem:qubits}), i.e. 
\begin{equation}\label{eq:prob_via_gamma}
	p(x|\rho) = \frac{1}{2^m}\sum_a (-1)^{a\cdot x} \gamma_{S(a)}^{-1} \widetilde{\langle S(a) \rangle}_{\rho}.
\end{equation}
Finally, note that the sum in Eq.~(\ref{eq:expectationvaluegivenoutcome}) is the convolution of the functions $(-1)^{a\cdot u}p(u|\rho)$ and $\beta_{S(a),u}$. Applying a Fourier transform gives the pointwise product of the Fourier transforms of these functions (convolution theorem), such that we can solve for $\beta_{S(a),u}$, see also Theorem~\ref{thm:expected_values}, Remark~\ref{rem:qubits}. 
This allows to express $\beta_{S(a),u}$ via 
\begin{alignat}{2}
	\beta_{S(a),u}=&\mathcal{F}^{-1}\left[\frac{\mathcal{F}[\langle S(a) \rangle_{\rho_x} \tilde{p}(x|\rho)](y)}{\mathcal{F}[(-1)^{a\cdot x} p(x|\rho)](y)}\right](u)\nonumber\\
	=&\mathcal{F}^{-1}\left[\frac{\sum_x (-1)^{x\cdot y} \langle S(a) \rangle_{\rho_x} \tilde{p}(x|\rho)}{\langle S(a+y) \rangle}\right](u)\nonumber\\
	=&\frac{1}{2^m}\sum_{b,x} (-1)^{(u\oplus x)\cdot b}\frac{ \langle S(a) \rangle_{\rho_x} \tilde{p}(x|\rho)}{\langle S(a+y) \rangle}.
\intertext{Inserting the state of the calibration experiment and using $\gamma_{S(a)}$ to replace the ideal expectation value by a noisy one yields}
		\beta_{S(a),u}=&\frac{1}{2^m}\sum_{b,x}(-1)^{(u\oplus x)\cdot b}\frac{\widetilde{\langle S(a)\rangle}_{\mathrm{cal.},x}^{(2)} \tilde{p}(x\vert\rho_{\mathrm{cal.}})}{\gamma_{S(a)} \langle S(a\oplus b) \rangle_{\mathrm{cal.}} }, \label{eq:betaSau_cal}
\end{alignat}
where the index $x$ in $\widetilde{\langle S(a)\rangle}_{\mathrm{cal.},x}^{(2)}$ indicates that this expectation value is calculated for a given outcome $x$ in the first measurement.
Eq.~(\ref{eq:betaSau_cal}) allows to obtain $\beta_{S(a),u}$ from measured data and known quantities only.

We give a summary of the calibration and correction in Appendix~\ref{app:algos}, Algorithm~\ref{alg:calibration}, Algorithm~\ref{alg:correctionaverage}, and Algorithm~\ref{alg:correction}, respectively.


To conclude, we connect back to the initial calibration description and revisit Eq.~(\ref{eq:betagamma}) from the point of view 
that Eqs.~(\ref{eq:post_measurement_state}) and (\ref{eq:faulty_probability}) describe the behavior of faulty measurements.
The rates used in Eq.~(\ref{eq:betagamma}) can be expressed in terms of the probability distribution $P$ as
\begin{equation}
	p_s=\sum_{(e,s)=1}P(e)\quad\text{and}\quad q_s=\sum_{(S(u),s)=1}P(u).
\end{equation}
Representing the measurement result of the stabilizer $S(a)$ by means of the random variable
\begin{equation}
	\chi_a\colon\{0,1\}^m\rightarrow \{\pm 1\},\quad\chi_a(x)\coloneqq(-1)^{a\cdot x}=(-1)^{(S(a),S(x))},
\end{equation}
and taking the expectation value with respect to the probability distribution in Eq.~(\ref{eq:faulty_probability})
yields the formula
\begin{equation}\label{eq:gamma_factor}
	\widetilde{\langle S(a) \rangle}_{\rho}=\left(\sum_{u}(-1)^{a\cdot u}P(u)\right)\langle S(a) \rangle_{\rho}.
\end{equation}
While this can be seen by direct computation, conceptually this also follows because the right-hand side in (\ref{eq:faulty_probability}) is a convolution (cf.\ Proposition~\ref{prop:probability_expectedvalues}).
Recognizing the multiplicative factor as $\gamma_{S(a)}=1-2q_{S(a)}$, we see that this is consistent with Eq.~(\ref{eq:betagamma}). The average over all outcomes of the post-measurement states can be written as
\begin{equation}\label{eq:post_measurement_avg}
	\rho_{\avg}\coloneqq\sum_x \tilde p(x\vert\rho)\rho_x =\sum_eP(e) e\left(\sum_x \pi_x\rho\pi_x\right)e^\dagger.
\end{equation}
Hence the error channel of propagating errors is the composition of a Pauli channel with the channel described by the projection operators. We note that the latter does not affect stabilizer measurements. From Eq.~(\ref{eq:post_measurement_avg}) we obtain the formula
\begin{equation}\label{eq:avgbeta}
	\langle s \rangle_{\rho_{\avg}}=\left(\sum_e(-1)^{(s,e)}P(e)\right) \langle s \rangle_{\rho},
\end{equation}
where the multiplicative factor can again be recognized as $\beta_s=1-2p_s$ in agreement with Eq.~(\ref{eq:betagamma}).
We also remark that this is recovered by averaging Eq.~(\ref{eq:expectationvaluegivenoutcome}) over all $x\in\{0,1\}^m$ and observing the equality
\(
\sum_x\beta_{s,x}=\beta_s
\).

\section{Examples}\label{sec:examples}
In this section we illustrate our method by calculating an example: 
The $[[7,1,3]]$ Steane code~\cite{Steane96b}, with two different noise models.

Instead of the more common sequential syndrome measurement,
we use the parallelized circuit shown in Appendix~\ref{app:Steane}, Fig.~\ref{fig:flagqubitcircuits}.
It is taken from \cite{Reichardt2020}, where it appears in the context of flag qubits.
We do not use this circuit for its fault tolerance properties,
but merely to illustrate the generality of our method.
Apart from our assumptions described above, mainly the assumption of Pauli noise,
at no point during the derivation of our calibration procedure did we need to specify the exact inner workings of the syndrome measurement. We implicitly use that faulty measurements can be formally composed (cf.\ Proposition~\ref{prop:composition_measurements}).

For the examples we use the following simple noise model to be able to plot the results easily.
Our proposed method is not restricted to this kind of noise, see above.
We assume single qubit gates (only the Hadamard gate $H$ in our case) are perfect.
Each two-qubit gate acts ideally, 
followed by a Pauli channel on the same qubits.
We denote the noisy version of the unitary gate $U_{ij}$ acting on qubits $i$ and $j$ with $\tilde{U}_{ij}$.
It is the quantum channel
\begin{equation}
	\tilde{U}_{ij}(p, \rho) = \epsilon_p^{(i,j)} (U_{ij}\rho U_{ij}^\dagger),
\end{equation}
where
\begin{equation}
	\epsilon_p^{(i,j)}(\rho):= \sum_{u,v=0}^3 p_{uv} \sigma_{u}^{(i)}\otimes \sigma_{v}^{(j)} \rho \sigma_{u}^{\dagger(i)}\otimes \sigma_{v}^{\dagger(j)},
\end{equation}
is a general two-qubit Pauli channel acting on qubits i and j with the probability $p_{uv}$ for the error $\sigma_{u}^{(i)}\otimes \sigma_{v}^{(j)}$. 
Later gates propagate errors introduced by earlier ones. 
This effect is therefore included in the calculations of measurement statistics below. 

\subsection{Estimation of the noise channel}
\label{sec:noiseestimation}
This section is concerned with estimating the noise channel $\cE$ (as in Fig.~\ref{fig:scenario}) combined with the propagating errors introduced by the measurement (cf.\ Eq.~\ref{eq:post_measurement_avg}).

In general, there is a procedure described by Thomas Wagner et.\ al.\ in \cite{wagner2022pauli} to reconstruct a Pauli channel from stabilizer measurements under the assumption that it is built out of independent ones acting in a sufficiently local manner.
In our specific example of the seven-qubit-Steane code this amounts to the following problem, which can be solved by direct computation (see Appendix~\ref{app:Steane}):

Given a Pauli channel consisting of independent channels on the individual qubits, i.e.\ the probability mass function $P\colon \mathbb P_7\rightarrow [0,1]$ is of the form
\begin{equation}
	P(e_1\otimes \cdots\otimes e_7)=P_1(e_1)\cdots P_7(e_7),\quad e_i\in \{\1, X, Y, Z\},
\end{equation}
determine $P$ from the expectation values of stabilizer measurements. 

In the spirit of this paper we ask how one should calculate the expectation values used for this channel estimation and compare the following two approaches:
\begin{enumerate}
	\item Calculate the expectation value of a stabilizer element $s$ directly from the noisy measurement results.
	\item Calculate the expectation value via our calibration method, 
	see Eq.~(\ref{eq:correctedexpectationvalue}).
\end{enumerate}
We do this for the explicit example of depolarizing noise in the measurement circuit. In order to focus on the effects due to the measurement we set $\cE=\Id$.
For the two-qubit depolarizing channel 
\begin{equation}
	\Delta_\lambda (\rho) = (1-\lambda) \rho + \lambda \frac{\1}{4} \label{eq:depolarizingnoise}
\end{equation}
with $\lambda\in[0,1]$, 
the error probabilities are
\begin{equation}
	p_{uv}=\left\{\begin{array}{cl}
		1-\frac{15}{16}\lambda & \text{if } u=v=0\\
		\frac{1}{16}\lambda & \text{else.}
	\end{array}\right.
\end{equation}

First, we look at the calibration measurement.
We use the state $\ket{\psi}$ of Eq.~(\ref{eq:productstate}), which has
\begin{equation}
	p(x|\proj{\psi}) = \left\{\begin{array}{cl}
		\frac{1}{108} & \text{if }\mathrm{wt}(S(x))=6\\
		\frac{1}{36} & \text{else,}
	\end{array}\right.
\end{equation}
as can be confirmed by a direct calculation. From the expectation values of the first measurement in the calibration experiment we can calculate the factors $\gamma_s$, see Eq.~(\ref{eq:betagamma}).
We inspect the circuit, see Fig.~\ref{fig:flagqubitcircuits}, and consider pairs of gates and measurements.
From this we see that for our simple noise model there are only three possibilities: the gate noise does not affect the measurement, only $X$- or only $Z$-errors affect the measurement. 
Therefore the factors $\gamma_s$ only depend on the number $n_{\gamma,s}$ of gates which can introduce errors affecting the respective outcome.
The relation reads
\begin{equation}
	\gamma_{S(a)} = (1-\lambda)^{n_{\gamma,a}}.
\end{equation}
For each of the 64 stabilizer elements the values are given by
\begin{equation} 
	\begin{aligned}
	n_{\gamma,a}=& (0, 21, 8, 22, 10, 22, 12, 23, 11, 23, 13, 24, 13, 25, 14, 25, 20, \\
	&27, 21, 27, 22, 26, 21, 26, 19, 26, 20, 25, 22, 27, 22, 27, 21, 27, \\
	&24, 27, 23, 27, 23, 26, 20, 26, 24, 26, 21, 26, 24, 27, 19, 26, 23, \\
	&26, 25, 27, 24, 28, 24, 26, 25, 27, 26, 28, 26, 28)_a.
	\end{aligned} \label{eq:ngammaa}
\end{equation}
Analogously
\begin{equation}
	\begin{aligned}
n_{\beta,a}=&	(0, 18, 21, 16, 23, 16, 15, 15, 18, 20, 26, 26, 27, 25, 23, 23, 21, \\
		&26, 24, 28, 27, 25, 26, 25, 16, 26, 28, 18, 25, 23, 22, 24, 23, 27, \\
		&27, 25, 26, 28, 28, 26, 16, 25, 25, 23, 28, 20, 22, 23, 15, 23, 26, \\
		&22, 28, 22, 18, 22, 15, 23, 25, 24, 26, 23, 22, 18)_a,
	\end{aligned}  \label{eq:nbetaa}
\end{equation}
which allows us to calculate $\alpha_s$ according to Eq.~(\ref{eq:alpha}). 
Note that the order in Eqs. (\ref{eq:ngammaa}) and (\ref{eq:nbetaa}) is given by Eq.~(\ref{eq:Sofa}).
We plot $\alpha(\lambda)$ for some stabilizer elements in Fig.~\ref{fig:alphaoflambda}.
\begin{figure*}
	\begin{center}
		\input{images/lambdaalpha.tex}
	\end{center}
	\caption{The correction factor $\alpha_s$, see Eq.~(\ref{eq:alpha}) from our calibration for the syndrome measurement shown in Fig.~\ref{fig:flagqubitcircuits} with simple depolarizing noise, see Eq.~(\ref{eq:depolarizingnoise}). We show only the values for the generators of the stabilizer, but it can be calculated analogously for all other elements.}
	\label{fig:alphaoflambda}
\end{figure*}

We can now choose whether to use these correction factors when calculating the expectation value of any stabilizer element after the syndrome measurement.
These are then two possible inputs for the reconstructed probability distribution (see Appendix~\ref{app:Steane}, Eq.~(\ref{eq:reconstructedP})) which in turn give rise to different estimates of the noise channel. We note that due to the fact that the ideal expectation values for a stabilizer state are 1, these inputs are actually just the factors $\beta_s$ respectively $\gamma_s$ from the calibration. However, the general procedure outlined in this section does not depend on this \textit{a priori} knowledge.

To compare these two distributions to the actual error distribution, we use the Kullback-Leibler divergence~\cite{KullbackLeibler}
\begin{equation}
	D_{KL} (P_1, P_2) := \sum_{x\in\mathcal{X}}
	P_1(x) \log_2 \left(\frac{P_1(x)}{P_2(x)}\right), \label{eq:DKL}
\end{equation}
where $P_2$ plays the role of the reference.
The result is shown in Fig.~\ref{fig:DKLofLambda}. The Bhattacharyya distance~\cite{Bhattacharyya46} shows the same qualitative behavior.
\begin{figure}[thp]
	\centering
	\input{images/DKLofLambda.tex}
	\caption{The distance (see Eq.~(\ref{eq:DKL})) of the estimated and the ideal error distribution, with (orange) and without (blue) doing our calibration.}
	\label{fig:DKLofLambda}
\end{figure}

As expected, the estimated noise channel is closer to the actual one when using our calibration method.

\FloatBarrier
\subsection{Error correction with calibrated syndrome measurements}
\label{sec:depolarizing_noise}

In this section we switch to a different application of our calibration method:
We use it to improve syndrome decoding.
Typically, the error correction operation is directly based on the measured syndrome.
For the seven-qubit-Steane code a possible choice of correction operations is shown in Table~\ref{tab:error_representative}. 
\begin{table}[thp]
	\caption{Minimal weight representatives for the classes of errors with the same syndrome. The measurement outcomes are ordered as $S_1, S_2, \ldots, S_6$.}\label{tab:error_representative}
	\begin{tabular}[t]{cccccc|l}
		\multicolumn{6}{c}{Syndrome} & error\\
		\hline
 + & + & + & + & + & + & $1$\\
+ & + & + & + & + & - & $X_1$\\
+ & + & + & + & - & + & $X_2$\\
+ & + & + & + & - & - & $X_3$\\
+ & + & + & - & + & + & $X_4$\\
+ & + & + & - & + & - & $X_5$\\
+ & + & + & - & - & + & $X_6$\\
+ & + & + & - & - & - & $X_7$\\
+ & + & - & + & + & + & $Z_1$\\
+ & + & - & + & + & - & $X_1Z_1$\\
+ & + & - & + & - & + & $X_2Z_1$\\
+ & + & - & + & - & - & $X_3Z_1$\\
+ & + & - & - & + & + & $X_4Z_1$\\
+ & + & - & - & + & - & $X_5Z_1$\\
+ & + & - & - & - & + & $X_6Z_1$\\
+ & + & - & - & - & - & $X_7Z_1$
	\end{tabular}\hfill
	\begin{tabular}[t]{cccccc|l}
		\multicolumn{6}{c}{Syndrome} & error\\
		\hline
+ & - & + & + & + & + & $Z_2$\\
+ & - & + & + & + & - & $X_1Z_2$\\
+ & - & + & + & - & + & $X_2Z_2$\\
+ & - & + & + & - & - & $X_3Z_2$\\
+ & - & + & - & + & + & $X_4Z_2$\\
+ & - & + & - & + & - & $X_5Z_2$\\
+ & - & + & - & - & + & $X_6Z_2$\\
+ & - & + & - & - & - & $X_7Z_2$\\
+ & - & - & + & + & + & $Z_3$\\
+ & - & - & + & + & - & $X_1Z_3$\\
+ & - & - & + & - & + & $X_2Z_3$\\
+ & - & - & + & - & - & $X_3Z_3$\\
+ & - & - & - & + & + & $X_4Z_3$\\
+ & - & - & - & + & - & $X_5Z_3$\\
+ & - & - & - & - & + & $X_6Z_3$\\
+ & - & - & - & - & - & $X_7Z_3$
	\end{tabular}\hfill
	\begin{tabular}[t]{cccccc|l}
		\multicolumn{6}{c}{Syndrome} & error\\
		\hline
- & + & + & + & + & + & $Z_4$\\
- & + & + & + & + & - & $X_1Z_4$\\
- & + & + & + & - & + & $X_2Z_4$\\
- & + & + & + & - & - & $X_3Z_4$\\
- & + & + & - & + & + & $X_4Z_4$\\
- & + & + & - & + & - & $X_5Z_4$\\
- & + & + & - & - & + & $X_6Z_4$\\
- & + & + & - & - & - & $X_7Z_4$\\
- & + & - & + & + & + & $Z_5$\\
- & + & - & + & + & - & $X_1Z_5$\\
- & + & - & + & - & + & $X_2Z_5$\\
- & + & - & + & - & - & $X_3Z_5$\\
- & + & - & - & + & + & $X_4Z_5$\\
- & + & - & - & + & - & $X_5Z_5$\\
- & + & - & - & - & + & $X_6Z_5$\\
- & + & - & - & - & - & $X_7Z_5$
	\end{tabular}\hfill
	\begin{tabular}[t]{cccccc|l}
		\multicolumn{6}{c}{Syndrome} & error\\
		\hline
- & - & + & + & + & + & $Z_6$\\
- & - & + & + & + & - & $X_1Z_6$\\
- & - & + & + & - & + & $X_2Z_6$\\
- & - & + & + & - & - & $X_3Z_6$\\
- & - & + & - & + & + & $X_4Z_6$\\
- & - & + & - & + & - & $X_5Z_6$\\
- & - & + & - & - & + & $X_6Z_6$\\
- & - & + & - & - & - & $X_7Z_6$\\
- & - & - & + & + & + & $Z_7$\\
- & - & - & + & + & - & $X_1Z_7$\\
- & - & - & + & - & + & $X_2Z_7$\\
- & - & - & + & - & - & $X_3Z_7$\\
- & - & - & - & + & + & $X_4Z_7$\\
- & - & - & - & + & - & $X_5Z_7$\\
- & - & - & - & - & + & $X_6Z_7$\\
- & - & - & - & - & - & $X_7Z_7$
	\end{tabular}
\end{table}

The noise introduced by the measurement is not symmetric in the sense that the amount of propagated noise might differ drastically from the noise on the measurement outcomes.
When the error correction operation is chosen according to the measured syndrome only, this is not taken into account.
With the calibration, however, we can base the correction operation on the most likely result of an ideal syndrome measurement on the system after the noise was introduced.

We recall the general situation of Fig.~\ref{fig:scenario}. A stabilizer state is affected by an error channel $\cE$ yielding the state $\rho$ and we perform a measurement recording the syndrome $x\in \ftwo^m$.
In principle the following data is available from the calibration:
The values
\begin{equation}\label{eq:syndromebetas}
	\beta_{S(a),u} = \sum_e P(e,u)(-1)^{(S(a),e)} = \sum_y\left(\sum_{\Syn(e)=y}P(e,u)\right)(-1)^{a\cdot y}
\end{equation}
for all $a\in \ftwo^m$, where 
\begin{equation}
	\Syn(e)=((e,S_1),(e,S_2),\ldots, (e,S_m))\in \ftwo^m
\end{equation}
is the \emph{syndrome} of $e$ determined by the relation
\begin{equation}
	\Syn(e)\cdot a = (e,S(a)).
\end{equation}
From these the conditional probabilities (for non-vanishing $P(x)$)
\begin{equation}\label{eq:syndromebetas2}
	P_{\Syn,x}(y)\coloneqq\frac{1}{P(x)}\sum_{\Syn(e)=y}P(e,x)
\end{equation}
of all syndrome classes can be computed via the inverse Fourier transformation of Eq.~(\ref{eq:syndromebetas}) (see also Remark~\ref{rem:qubits}).

In order to offer a sensible description of maximum-likelihood decoding, the Pauli channel $\cE$ should be considered.
If it is trivial, then $p_x(y)=P_{\Syn,x}(y)$
can be interpreted as the conditional probability that an error with syndrome $y$ has occurred given that $x$ was observed (see Eq.~(\ref{eq:conditional_postmeasuremnt_stab})).
We refer to Remark~\ref{ex:maxlike} for more details in the general situation.

In this context we describe four different decoding schemes, producing the syndrome class of a potential error correction operation.
Within this class the minimal weight representative is then chosen to be applied to the post-measurement state $\rho_x$.

\begin{enumerate}[label=(D\arabic*)]
	\item\label{d1} We ignore the internal noise and take $x$ itself. This amounts to ordinary minimal-weight decoding.
	
	\item\label{d2} From the calibration data we determine the most likely syndrome class with respect to the probabilities $P(e)$ of propagating errors. This is added to the measured syndrome $x$.
	
	\item\label{d3} Another option is to select the syndrome $y$ based on the signs of the expectation values for the chosen stabilizer generators, so that they are all positive after correction. This of course assumes that the error rates are small enough to avoid vanishing values. 
	In more detail, we can use equations \ref{eq:prob_via_gamma} and \ref{eq:expectationvaluegivenoutcome} to calculate $\langle S_i\rangle_{\rho_x}$ and hence $y\in \ftwo^m$ such that $\operatorname{sign}\langle S_i\rangle_{\rho_x}=(-1)^{y_i}$.

	\item\label{d4} Given the measured syndrome $x$ we take the most likely error-syndrome $\argmax_{y} p_x(y)$ as described above.
\end{enumerate}

The noise introduced by the syndrome measurement is added to that of the channel $\varepsilon$, see Fig.~\ref{fig:scenario}. 
Specifically, the propagating Pauli errors are multiplied and the syndromes are added modulo two.
How the combined error of the two independent noise sources is handled depends heavily on the decoder.
Therefore, we want to separate their effects, and since the effect of $\varepsilon$ is well studied, we focus on the effects due to the noisy measurement in the numerical analysis by setting the error introduced by $\varepsilon$ to $\1$.
In other words, we assume that the input state $\rho$ is a code word.
Then the post-measurement state takes the form
\begin{equation}\label{eq:conditional_postmeasuremnt_stab}
	\rho_x = \sum_e P_x(e)e\rho e^\dagger,
\end{equation}
the result of applying the Pauli channel associated with the conditional distribution $P_x(e)=P(e,x)/P(x)$.
In view of the notion that a correction is successful if the actual error and the guessed one only differ by a stabilizer, the success rate of a decoding scheme that outputs the error $e(x)$ is given by
\begin{equation}
	p_{\operatorname{succ}}\coloneqq\sum_{x,a} P(x)P_x(e(x)S(a))=\sum_{x,a} P(e(x)S(a),x).
\end{equation}
We analyzed the failure rate of the decoders (D1) to (D4) for the depolarizing noise model and the 7-qubit-Steane code with the syndrome measurement circuit shown in Fig.~\ref{fig:flagqubitcircuits}.
The results are plotted in Fig.~\ref{fig:decodingstrategies}.
Using maximum likelihood decoding based on the calibration data leads to the lowest logical error rate.
We see that the absolute difference in terms of the logical error rate w.r.t. not using the calibration data can be up to 8 \% for the 7-qubit-Steane code.
Given the low resource overhead of our calibration, this is a significant improvement.

\begin{figure}[tph]
	\centering
	\input{images/decodingstrategies.tex}
	\caption{
		 The failure rates of the different decoding strategies for the 7-qubit-Steane code with the syndrome measurement shown in Fig.~\ref{fig:flagqubitcircuits} and depolarizing noise, see Eq.~(\ref{eq:depolarizingnoise}). 
		 While the naive and approximate curves are indistinguishable here, the values can differ up to roughly $0.002$ for $\lambda = 0.2$ with the approximate decoding performing better. 
		The best performance is achieved by maximum likelihood decoding based on the input from our calibration.}
	\label{fig:decodingstrategies}
\end{figure}

One can also consider examples with more drastic differences between naive and maximum likelihood decoding.
We choose the noise model where with probability $\lambda$ a $Z$-error is introduced on the control qubit.
This corresponds to
\begin{equation}
	p_{00}= 1-\lambda \text{ and } p_{30}= \lambda 
\end{equation}
and all other $p_{uv}=0$.
To demonstrate that the effect does not only occur for the 7-qubit Steane code that we have mainly considered, we will also examine the 5-qubit perfect code with the circuit shown in Figure~\ref{fig:fivequbitshortype}.
For both circuits, we assume for each two-qubit gate that the control qubit is the lower of the two qubits.
Direct calculation yields the probability of no flip on the measurement outcomes
\begin{equation}
\begin{aligned}
	P_7(u=0) =& (\lambda -1)^4 \left(8 \lambda ^4-16 \lambda ^3+12 \lambda ^2-4 \lambda +1\right)^2 \left(16 \lambda ^4-24 \lambda ^3+16 \lambda ^2-4 \lambda +1\right)^4\\
\text{and }	P_5(u=0) =& (4 (\lambda-1) \lambda (2 (\lambda-1) \lambda+1)+1)^4,
\end{aligned}
\end{equation}
for the 7-qubit-Steane code and the 5-qubit perfect code, respectively.

For the sake of clarity let us assume that the input state is a code word.
Additional correctable errors on the input state do not change the following argument, but they unnecessarily obfuscate it.
Because with the described noise model the errors are only introduced on the ancillary qubits, there is no need to perform a correction operation.
And because our method yields the correct expectation values $\langle S(a)\rangle_{\rho_x} = 1$, we never apply a correction operation, as needed.
However, with probability $1-P(u=0)$ the syndrome differs from the all $+1$ outcome and the standard approach fails to identify the appropriate correction operation.
For this extreme example the difference in the two approaches is very pronounced, as can be seen from Fig.~\ref{fig:probabilityofdisagreement}. 
Indeed, for both codes there are parameter ranges of the noise parameter where the standard syndrome decoding fails almost certainly, while with calibration it never fails in this scenario.
The behavior when the noise parameter is $\lambda = 1$, i.e. when errors are introduced with certainty, depends on whether the measured qubits act as a control for an odd or even number of gates.	

Note that while the difference will not be this big in other scenarios, our correction can only lower the logical error rate, given the assumptions made in our derivation are fulfilled.
\begin{figure}[tph]
	\begin{quantikz}[row sep=1pt, column sep=1pt]
		& \targ{} 	& \qw 		& \qw 		& \qw 		& \qw & \qw & \qw 
		& \qw 		& \qw 		& \qw 		& \qw 		& \qw & \qw & \qw
		& \qw		& \qw 		& \qw 		& \targ{}	& \qw & \qw & \qw
		& \qw		& \qw 		& \ctrl{}	& \qw		& \qw & \qw & \qw		
		\\
		& \qw 		& \ctrl{} 	& \qw		&\qw 		& \qw & \qw & \qw
		& \targ{}	& \qw 		& \qw 		&\qw 		& \qw & \qw & \qw
		& \qw 		& \qw 		& \qw 		&\qw 		& \qw & \qw & \qw
		& \qw 		& \qw 		& \qw 		&\targ{}	& \qw & \qw & \qw
		\\
		& \qw 		& \qw  		& \ctrl{} 	&\qw 		& \qw & \qw & \qw
		& \qw 		& \ctrl{} 	& \qw 		&\qw 		& \qw & \qw & \qw
		& \targ{}	& \qw 		& \qw 		&\qw 		& \qw & \qw & \qw
		& \qw		& \qw 		& \qw 		&\qw 		& \qw & \qw & \qw
		\\
		& \qw 		& \qw 		& \qw 		& \targ{} 	& \qw & \qw & \qw
		& \qw 		& \qw 		& \ctrl{}	& \qw 		& \qw & \qw & \qw
		& \qw 		& \ctrl{}	& \qw		& \qw 		& \qw & \qw & \qw
		& \targ{}	& \qw		& \qw		& \qw 		& \qw & \qw & \qw
		\\
		& \qw 		& \qw 		& \qw 		& \qw 		& \qw & \qw & \qw 
		& \qw		& \qw 		& \qw 		& \targ{}	& \qw & \qw & \qw
		& \qw		& \qw 		& \ctrl{}	& \qw		& \qw & \qw & \qw
		& \qw		& \ctrl{}	& \qw		& \qw		& \qw & \qw & \qw
		\\
		\lstick[4]{$\ket{\mathrm{GHZ}_4}$} & \ctrl{-5} & \qw & \qw & \qw & \meter{X} 
		&\hspace{1.5cm} & 
		\lstick[4]{$\ket{\mathrm{GHZ}_4}$} & \ctrl{-4} & \qw & \qw & \qw & \meter{X}
		&\hspace{1.5cm}& 
		\lstick[4]{$\ket{\mathrm{GHZ}_4}$} & \ctrl{-3} & \qw & \qw & \qw & \meter{X}
		&\hspace{1.5cm}& 
		\lstick[4]{$\ket{\mathrm{GHZ}_4}$} & \ctrl{-2} & \qw & \qw & \qw & \meter{X}
		\\
		& \qw & \ctrl{-5} & \qw & \qw & \meter{X}&&
		& \qw & \ctrl{-4} & \qw & \qw & \meter{X}&&
		& \qw & \ctrl{-3} & \qw & \qw & \meter{X}&&
		& \qw & \ctrl{-2} & \qw & \qw & \meter{X}&&
		\\
		& \qw & \qw & \ctrl{-5} & \qw & \meter{X}&&
		& \qw & \qw & \ctrl{-4} & \qw & \meter{X}&&
		& \qw & \qw & \ctrl{-3} & \qw & \meter{X}&&
		& \qw & \qw & \ctrl{-7} & \qw & \meter{X}&&
		\\
		& \qw & \qw & \qw & \ctrl{-5} & \meter{X}&&
		& \qw & \qw & \qw & \ctrl{-4} & \meter{X}&&
		& \qw & \qw & \qw & \ctrl{-8} & \meter{X}&&
		& \qw & \qw & \qw & \ctrl{-7} & \meter{X}&&
	\end{quantikz}
	\caption{A Shor-type syndrome measurement~\cite{Shor1996} for the Five-qubit code. It uses a source of $\ket{\mathrm{GHZ}_4}=(\ket{0000}+\ket{1111})/\sqrt{2}$ states, which we assume to be perfect. The parity of each block of four parallel $X$-basis measurements determines one bit of the syndrome corresponding to one stabilizer generator.}\label{fig:fivequbitshortype}
\end{figure}
\begin{figure}[thp]
	\centering
	\def\svgwidth{0.75\textwidth}
	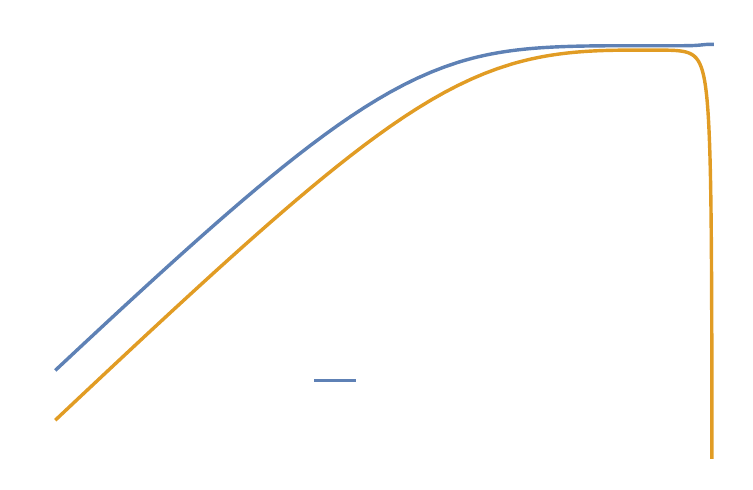
	\caption{The probability of a wrong syndrome outcome as a function of the noise parameter $\lambda$. With this probability the standard syndrome correction leads to logical errors, while the calibration allows to always do the appropriate correction in this scenario, see Section~\ref{sec:depolarizing_noise}.}
	\label{fig:probabilityofdisagreement}
\end{figure}

\section{Experiment}
\label{sec:experiment}
In this section we evaluate the performance of our method on a real quantum computer.
The main challenge is the sparse connectivity of currently available backends, such as those provided by the IBM Quantum Experience~\cite{IBMQuantum}. 
As a result, circuits must be compiled in such a way as to be executable using physically realizable gate operations, in particular,
carrying out two-qubit operations only between qubits that couple to one another.

In general, this entails that information stored in a qubit has to be transported along the backend using swap-operations. However, this presents a problem: as our calibration method necessitates carrying out the same measurement twice, in the compiled circuit, a first measurement will in general change the association between circuit- and backend-qubits. Thus, the physical instantiation of a second stabilizer measurement will no longer be the same, and we have no reason to suppose that noise introduced during this measurement should equal that introduced by the first.

To meet these challenges, stabilizer measurements have to be defined in such a way as to both respect the backend topology, and leave the association between circuit- and backend-qubits (the `embedding map') invariant. 
To ensure that our experiment satisfies this constraint, we manually compiled the circuit, see Appendix~\ref{app:hardwaregraph} for details. 
Concretely, we implement the 7-qubit Steane code on backends provided by the IBM Quantum Experience.
However, the adaptations required to implement these measurements on the physical backend significantly increase the circuit depth: to implement a single CNOT operation in the syndrome measurement, a total of 13 CNOTs must be physically performed. 
This means that a simultaneous measurement of all stabilizer operators on the quantum backend is prohibitively expensive, given the capabilities of current NISQ systems.

Nevertheless, it is possible to demonstrate our method by means of running the calibration (i.e. measuring the same stabilizer twice), then using the correction factors to estimate the expectation value after the noisy measurement. 
To keep running the experiment on currently available NISQ-devices feasible, given the large overhead induced by mapping the stabilizer measurements to the device topology, in the following, we will concentrate on evaluating our method for the measurement of individual stabilizer generators.
Thus, for each experiment, we will measure the same generator twice for the calibration, then use the correction factors estimated from this data to obtain an improved guess at the post-measurement value in an experiment involving only the measurement of that generator.

\subsection{Noisy Simulation}
\label{subsec:noise}

To first find the effect of implementing our method using the operations schedule discussed in the preceding section, we simulate the execution of the calibration and a single measurement for a given stabilizer generator. 
We use the simulator provided by Qiskit~\cite{qiskit}.

For the stabilizer generators involving only `Z'-measurements, the system was initialized in the state $\ket{00\ldots 0}$, such that the ideal expectation value is equal to $1$. We use different noise models to illustrate the method. In Fig.~\ref{fig:sim_bitflip}, every operation, including qubit reset and measurement, carries a $1\%$ chance of flipping any bit being operated upon. Fig.~\ref{fig:sim_meas} shows a scenario in which only measurements carry a $5\%$ chance of flipping the outcome, thus, no errors are propagated by the measurement. The ideal expectation value is recovered to a degree only limited by the shot noise inherent in estimating the correction factor (using in each case $10^4$ shots).

\begin{figure}[tp]
	\centering
	\subfloat[ 	\label{fig:sim_bitflip}]{
        \begin{overpic}[width=0.49\linewidth]{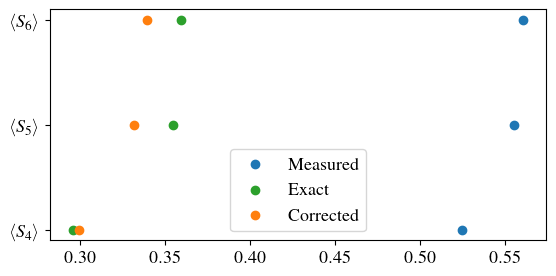}
			
	\end{overpic}}\hfill
	\subfloat[ \label{fig:sim_meas}]{	
        \begin{overpic}[width=0.49\linewidth]{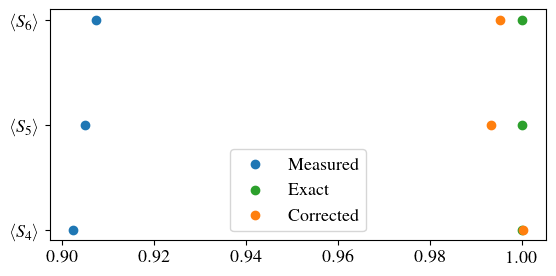}
			
	\end{overpic}}
	\caption{Measurement of the three stabilizer generators $S_4$, $S_5$, and $S_6$, subject to (a) bit flipping noise and (b) measurement noise. 
	Our method maps the measured expectation values (blue) to corrected ones (orange) and recovers the true values (green).
	The remaining gap is only due to the finite number of experimental runs ($10^4$ shots).}
	\label{fig:sim}
\end{figure}

The data for the exact values in Fig.~\ref{fig:sim} was obtained by using the Qiskit simulator's ability to take snapshots of the exact state during the computation. 
However, in a scenario involving a real device, this is obviously not feasible. 

To nevertheless obtain a point of comparison, we follow the measurement of a stabilizer generator with single qubit measurements on all data qubits, from which the value after the stabilizer measurement can be computed. 
If these `final' qubit measurements can be idealized as noiseless, we would obtain an exact value that could directly be compared to the value reconstructed by means of the correction factor.
However, as these measurements are themselves noisy, further errors introduced have to be corrected. 
For a low number of qubits $n$, such as in this case, it is feasible to perform full Readout Error Mitigation (REM) (see~\cite{cai2023quantum} and references therein): 
prepare and measure all $2^n$ basis states in the computational basis, and then use the obtained results to compute a transition matrix whose inverse allows to correct the observed distribution.

With this, we obtain four values per measured stabilizer generator: the outcome of the stabilizer measurement itself, the value corrected using our method, the unmitigated result from the final single qubit measurements, and the result after REM. 

\begin{figure}[tp]
	\centering
	\subfloat[ 	\label{fig:REM_cairo}]{
        \begin{overpic}[width=0.49\linewidth]{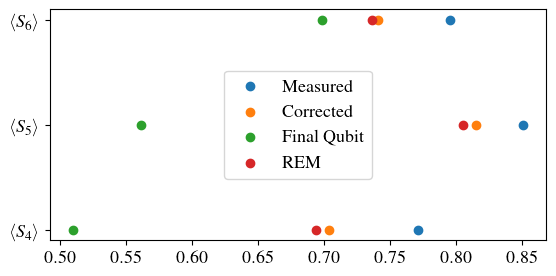}
			
	\end{overpic}}\hfill
	\subfloat[ \label{fig:REM_hanoi}]{	
        \begin{overpic}[width=0.49\linewidth]{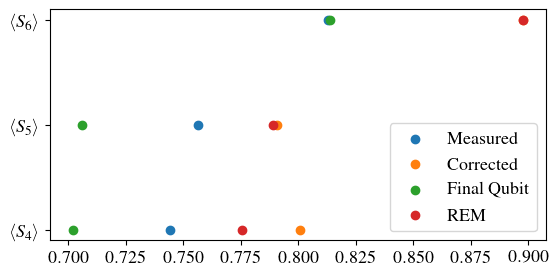}
			
	\end{overpic}}
	\caption{Measurement of the three stabilizer generators $S_4$, $S_5$, and $S_6$, as carried out on (a) the simulated ibm\_cairo-device, (b) the simulated ibm\_hanoi-device. The measured expectation values (blue) are corrected with our proposed method (orange). As a reference value we apply readout error mitigation (REM) to single qubit measurements at the end of the circuit (green), the result of which (red) is in good agreement with our corrected values.}
	\label{fig:REM}
\end{figure}

The results of simulating this procedure are shown in Fig.~\ref{fig:REM}. As can be seen, the results differ for the virtual ibm\_cairo (Fig.~\ref{fig:REM_cairo}) and ibm\_hanoi-devices (Fig.~\ref{fig:REM_hanoi}), due to differences in their noise model.
However, in both cases, the values reconstructed using our correction factor and REM are in good agreement, lending credence to the proposition that our method should also perform well in real experiments.

\subsection{Comparison to Experimental Data}

Finally, we have implemented our method on the ibm\_hanoi-backend provided via the IBM Quantum Experience. Fig.~\ref{fig:real} shows the results obtained. These data can be compared to those obtained via simulation in Fig.~\ref{fig:REM_hanoi}. 

\begin{figure}[tp]
	\centering
	\subfloat[Initial state $\ket{00\ldots 0}$ \label{fig:real}]{\begin{overpic}[width=0.5\linewidth]{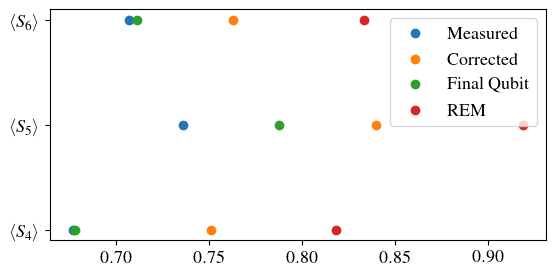}\end{overpic}}\\
	\subfloat[Initial state $\ket{11\ldots 1}$	\label{fig:all1}]{\begin{overpic}[width=0.5\linewidth]{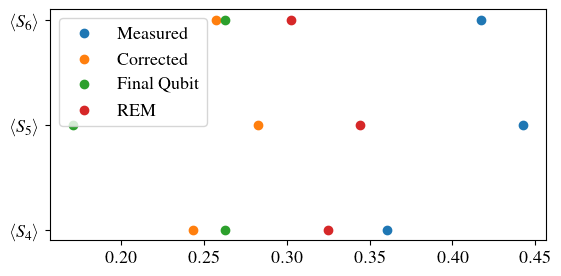}
			
		\end{overpic}}\\
	\subfloat[Average of the two results from (a) and (b), which removes the asymmetry of the noise.\label{fig:mix}]{\begin{overpic}[width=0.5\linewidth]{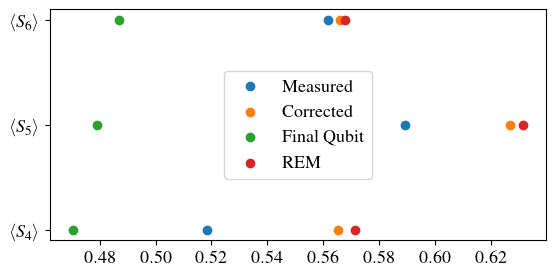}
	\end{overpic}}
	\caption{Measurement of the three stabilizer generators $S_4$, $S_5$, and $S_6$, as carried out on the ibm\_hanoi-device.}
	
\end{figure}

In comparison with the simulated results, the agreement between the values obtained via REM and via our method is substantially worse.
However, it is not necessarily clear which of these values should be regarded as `more accurate': both represent only a reconstruction of the value after it has been disturbed by the noisy measurement.

In the case of the ibm\_hanoi-device, it is interesting to note that the value obtained via the final single qubit measurements is comparable to that from the uncorrected stabilizer measurement, or in one case even closer to the ideal value. 
As the former is subject to additional noise from the stabilizer measurement, and is itself inherently noisy, one might instead have expected a value farther from the ideal.
A possible reason for this is that the ideal value is obtained in the state $\ket{00\ldots 0}$, and the measurement errors for the IBM devices, due to relaxation effects, are generally asymmetric in nature, with the likelihood of a mis-measured $0$ exceeding that of mis-measuring $1$. 

This interpretation is bolstered by the fact that, as compared to the ibm\_cairo-device, ibm\_hanoi suffers less from propagated errors, with the noise introduced in measurement thus dominating.
This can be inferred from the fact that the corrected values are larger than the measured ones. 
This implies $\alpha > 1$, and thus, $\gamma < \beta$, meaning that the measurement more strongly impacts the observed expectation value.

To investigate the influence of the asymmetric noise, we repeated the above measurement procedure, starting instead with the state $\ket{11\ldots 1}$. 
Fig.~\ref{fig:all1vall0} shows a comparison of the quasi-probabilities for the 15 states providing the highest contribution to the result when initializing the ibm\_hanoi-device in the state $\ket{11\ldots 1}$ (blue) versus $\ket{00\ldots 0}$ (orange). 
As can be seen, the state $\ket{11\ldots 1}$ shows a significantly greater amount of decay, with significant probability mass accumulating on the `rest' of the states, i.e. those beyond the 15 highest contributors.

\begin{figure}[tp]
	\centering
	\begin{overpic}[width=0.5\linewidth]{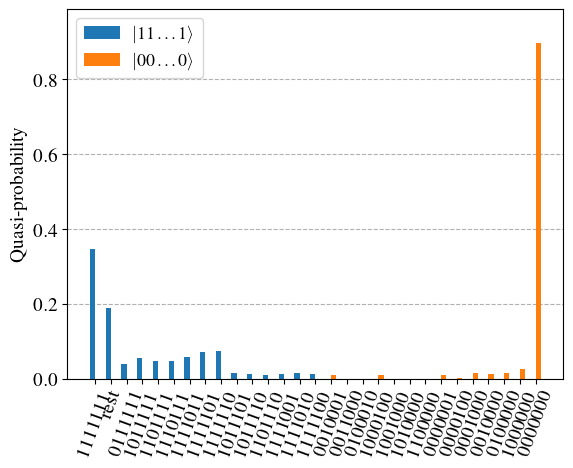}
		
	\end{overpic}
	\caption{Comparison for an example measurement run of the quasi-probability distribution of results when initializing into the state $\ket{00\ldots 0}$ versus $\ket{11\ldots 1}$ on the ibm\_hanoi-device. The states are ordered by their hamming-distance to the target state.}
	\label{fig:all1vall0}
\end{figure}

Accordingly, as Fig.~\ref{fig:all1} shows, for the ibm\_hanoi-device again initialized to the $\ket{11\ldots 1}$-state, we observe a significantly noisier outcome for the considered stabilizer generators. 
Moreover, in contrast to the $\ket{00\ldots 0}$-case, the measured values now tend to overestimate the true values.
As the figure shows, this behavior is correctly captured, and partially corrected for, by our procedure.

To symmetrize the effect of the noise we apply the logical operation $X^{\otimes n}$ on the initial state in half of the runs. 
This interchanges the roles of $0$ and $1$, but ideally this does not affect the measurement result.
However, it removes the bias of the relaxation towards the logical $\ket{0}$-state.
This ensures that the assumption of Pauli noise, which is symmetric w.r.t. to $0$ and $1$, is better justified.
Equivalently, we consider the incoherent mixture $\frac{1}{2}\left(\ket{00\ldots 0}\!\!\bra{00\ldots 0} + \ket{11\ldots 1}\!\!\bra{11\ldots 1}\right)$ by simply averaging over the obtained results. 
As can be seen in Fig.~\ref{fig:mix}, this yields a significantly better agreement between the results obtained via our method and the reconstructed `true' values obtained via readout-error mitigated final qubit measurements.
Note, however, that our results were obtained without enforcing Pauli noise by randomized compiling.
\section{Conclusion}
\label{sec:conclusion}

We have introduced and elaborated on a method that allows using a single calibration experiment to efficiently find the noise introduced by stabilizer measurements during error syndrome extraction. 
During such a measurement, two sources of errors are relevant: the readout error affecting the observed value, and the propagated error adding to the noise introduced into the quantum circuit.
This means that, e.g., the expectation value of a stabilizer generator calculated from the measurement results generally differs from the `true' value due to the noise added during the measurement. 
However, it is this value that is most useful for applications like estimating a circuit's total noise channel, or informing subsequent error correction operations.

The proposed method is to simply carry out the stabilizer measurement to be calibrated twice in succession with a known input state.
The information thus obtained then allows computing an estimate of the expectation value of the stabilizer after measuring it has introduced additional noise into the quantum circuit. 
The estimate can be conditioned on the observed syndrome or averaged over all outcomes.

Making use of the conditioned expectation values has the potential of improving several common tasks related to the performance of quantum computing workloads. 
For one, we could show that the accuracy of the estimate of the total noise affecting a quantum circuit, performed using the method introduced by Wagner et al.~\cite{wagner2022pauli}, can be greatly increased.
Additionally, in cases where the standard approach leads to an improper correction operation for errors introduced during execution, using the corrected expectation value yields better outcomes, as we have discussed using the example of the $[[7,1,3]]$-Steane code.

For our method to be applicable, we have to assume that the noise introduced by a stabilizer measurement is independent of its position in the circuit. 
To make this assumption plausible on currently available NISQ devices, care has to be taken: as these devices are typically of limited qubit connectivity, during execution, swap-operations have to be used to facilitate the execution of two-qubit gates on arbitrary circuit-qubits. 
However, these operations generally change the mapping between circuit-qubits and the physical qubits of the backend; but then, any repeated measurement will be implemented in a different way physically, calling the assumption into question. 

To remedy this, we have proposed an operations schedule that implements the stabilizer measurements in such a way as to leave the mapping between circuit- and backend-qubits invariant. 
Using this, we have shown in numerical simulations that our method recovers the `true' expectation value under various different noise models.
Moreover, running calibration and measurement circuits on a real quantum backend accessed via the IBM Quantum Experience, we could demonstrate the production of better guesses for the expectation value in a real-world setting.

There are several possible ways to further develop the technique, and apply it to different tasks. 
For further experimental work, there are error correcting codes optimized for the `Heavy Hexagon'-layout of IBM's quantum devices~\cite{chamberland2020topological}, which might help ameliorate the problem of having to introduce an excess of swap-operations.
Additionally, the general strategy of using calibration measurements to find the effect of measurement-induced noise may be applied beyond the setting of error correction. 
A possibility here seems to be the use of entanglement witnesses: While their measurement is used to characterize the entanglement present in a quantum state, the noise introduced during the measurement may itself reduce it. 
Hence, estimating the `true' value after this noisy measurement may lead to better lower bounds on the entanglement still present, and thus, usable for e.g. quantum communication tasks.

\begin{acknowledgments}
This work was supported by the QuantERA grant EQUIP via DFG project 491784278, by the Federal Ministry for Economics and Climate Action (BMWK) via project R-QIP and by the DLR project ELEVATE. The funders played no role in study design, data collection, analysis and interpretation of data, or the writing of this manuscript. The initial idea for this project was developed in discussions between the authors and Thomas Wagner, Hermann Kampermann and Dagmar Bruß. We thank Thorge Müller for his comments and support. We acknowledge the use of IBM Quantum services for this work. The views expressed are those of the authors, and do not reflect the official policy or position of IBM or the IBM Quantum team.
\end{acknowledgments}

\section*{Author contributions}
ME organized the project and obtained the initial theoretical results, which CW and ME subsequently generalized. JS designed and carried out the experiments. All authors contributed to the interpretation of the data and wrote the manuscript together.

\section*{Data availability statement}
The data containing the numerical results presented in Section~\ref{sec:experiment} are archived by the authors institution and will be shared on request to the corresponding author. 

\section*{Competing Interests}
The authors declare no competing interests.

\bibliography{noiseestimation} 
\onecolumngrid
\clearpage
\section*{Supplemental Material}
\appendix

\section{Mathematical background for our calibration procedure}
\label{app:details}

The goal of this appendix is to give a rigorous treatment of syndrome measurements affected by Pauli noise at a conceptually appropriate level of generality. We begin by setting up notation and recalling some relevant mathematical notions.
While the presentation is certainly influenced by elementary representation theory, no knowledge of it is required to follow the statements.

\subsection*{Preliminaries}
	Throughout this section the following data is fixed:
	A finite dimensional complex inner product space $V$ together with a finite \emph{error group} $E \leq U(V)$ of unitary operators and a finite abelian \emph{stabilizer group} $S$ acting on $V$ via operators in $E$. 
	The associated \emph{index group} is the quotient 
	\(
		G\coloneqq E/(E\cap U(1))
	\)
	by the scalars (embedded as multiples of $\Id_V$). The index group $G$ will be assumed to be abelian. In that case there is a well-defined 'commutator' pairing
	\(
		\omega:G\times G\rightarrow U(1)
	\)
	determined by the relation $e_1e_2=\omega(e_1,e_2)e_2e_1$ in $E$.

\begin{notation}
	The action referred to above is encoded by a group homomorphism $\rho_S\colon S\rightarrow E$. It will be left implicit and we simply write $sv=\rho_S(v)$.
\end{notation}

\begin{rem}\label{rem:errorgroups}
	For the purposes of this appendix, the above terms error- and stabilizer group can just be taken as names.
	We recall their usual meaning  (e.g.\ see \cite{knill1996group}, \cite{klappenecker2002beyond}): the trace $\tr e=0$ vanishes for all non-scalar $e \in E$ ($[e]\neq 1\in G$) and $(\dim V)^2=\vert G\vert$ holds. Equivalently, if $\{\rho_V(g)\}_{g\in G}\subset E$ is a complete set of representatives, then it (and hence any other) forms a \emph{unitary error basis} on $V$, that is an orthogonal basis of the linear operators $\cB(V)$ with respect to the trace inner product.	
	
	The action of the stabilizer group should contain no scalars other than the identity: $sv= \lambda v$ for some $\lambda \in U(1)$ and all $v\in V$ implies $\lambda=1$ (also see Remark~\ref{rem:dimensions} below).
\end{rem}
\begin{example}
	The case of interest in the main body of text is $V=(\bC^2)^{\otimes n}$ with $E=P_n$ the Pauli group
	and $G=P_n/\{\pm 1, \pm i\}$ the effective Pauli group.
\end{example}

\begin{rem}
	Although only the above explicit case is used in this paper,
	we prefer to work at this level of generality and in a coordinate free way since it conceptually clarifies the situation  without imposing much additional work. This also avoids some of the baggage entailed by the necessary bookkeeping.
	We also note that all combinations of modular qudits as basic building blocks are covered here.
\end{rem}

\begin{notation}\begin{itemize}
	\item 
	We will use $(\blank)$ as placeholder notation, i.e.\ for a function $f$ of two variables $f(a,\blank)$ is the map $b\mapsto f(a,b)$.
	\item
	If $A$ is a group, we write \(\widehat A\coloneqq \operatorname{Hom}(A, U(1))\) for the dual group consisting of the \emph{characters} on $A$, i.e.\ the homomorphisms to the circle group \(U(1)\leq \mathbb C^\times\) with point-wise multiplication. The unit element (the constant function with value 1) is also denoted by 1.
	\item
	Multiplicative notation will be used for all groups with the exception of $\ftwo^m$.
	\item
	Let $P$ be a probability distribution on a finite set $X$ with mass function $P\colon X\rightarrow [0,1]$.
	The expectation value of a random variable $f\colon X\rightarrow \bC$ (i.e.\ a function) with respect to $P$ will be denoted by
	\(
	E(f;P) \coloneqq \sum_{x\in X}f(x)P(x).
	\)
	\end{itemize}
\end{notation}

\begin{prop}\label{prop:isotypical_decomposition}
As a $S$-representation
\(
	V=\bigoplus_{\chi \in \widehat S}V_{\chi}
\)
decomposes into the orthogonal sum of its isotypical components
\[
	V_{\chi}\coloneqq\{v\in V\vert sv=\chi(s)v\ \forall s\in S\}\subseteq V
\]
with associated orthogonal projections
\[
	\pi_{\chi}\coloneqq \frac{1}{\vert S\vert} \sum_{s\in S}\overline{\chi(s)}s.
\]
Put differently, this means that $V_{\chi}$ is the common $\chi$-eigenspace of $S$ and the $\pi_{\chi}$ form a complete set of operators for a projective measurement:
\[
	\sum_{\chi\in \widehat S} \pi_{\chi}=\Id_V.
\]
\end{prop}
\begin{proof}
This follows by direct computation from the orthogonality relations
	\(
		\sum_{s\in S}\chi(s)=0	
	\)
	and
	\(
		\sum_{\phi\in \widehat S}\phi(t)=0	
	\)
	for any non-trivial character $\chi\neq 1$ and element $t\neq 1\in S$,
	e.g.\ see \cite{Konrad} or any textbook on representation theory.
\end{proof}

Given a state $\rho\in \mathcal B(V)$ (i.e.\ $\rho\geq 0$ is positive and $\tr \rho=1$), we write
\[
	p(\chi\vert \rho)\coloneqq \tr(\pi_{\chi}\rho)
\]
for the probability of obtaining the result $\chi\in \widehat S$ in an ideal measurement.
Moreover, the expectation value
\[
	\expvalideal{s}{\rho} \coloneqq E(\ev_s;p(\blank\vert \rho)) = \sum_{\chi}\chi(s)p(\chi\vert \rho) = \tr (s\rho)
\]
of the random variable $\ev_s\colon \widehat S\rightarrow \bC$, $\ev_s(\chi)=\chi(s)$ agrees with the trace on the right-hand side since the projections sum to the identity.
\begin{rem}\label{rem:dimensions}
	\begin{itemize}
		\item
		In the situation of Remark~\ref{rem:errorgroups} the non-scalar condition on the action of $S$ ensures that $\dim V_\chi= \frac{\dim V}{\vert S\vert}\vert S^{\operatorname{triv}}\vert$ holds for the eigenspaces described above, where $S^{\operatorname{triv}}\leq S$ is the subgroup of elements acting as the identity. This observation amounts to computing the trace in the projections formula.
		\item			
		While $s$ will not be an observable in general (unless $s^2=\Id_V$, e.g.\ in the case of qubits), this is formally not an issue. We also note that real and imaginary parts can be written as expectation values of the commuting observables $\frac{1}{2}(s+s^\dagger)$ and $\frac{i}{2}(s^\dagger-s)$.
	\end{itemize}
\end{rem}

\noindent
Let $A$ be a finite abelian group. The \emph{Fourier transformation}
\[
	\cF:\operatorname{Fun}(\widehat A,\mathbb C)\overset{\cong}{\longrightarrow} \operatorname{Fun}(A,\mathbb C)
\]
of complex valued functions on $\widehat A$ (and its inverse) will take the form 
\[
	\quad\cF[f](a)= \sum_{\chi \in \widehat A}\chi(a)f(\chi),\quad\cF^{-1}[g](\chi)= \frac{1}{\vert A\vert}\sum_{a\in A}\overline{\chi(a)}g(a).
\]
The \emph{convolution} \(	f\ast g\colon \widehat A\rightarrow \mathbb C	\) of two functions $f,g$ on $\widehat A$ is given by 
\[
	(f\ast g)(\chi)=\sum_{\phi\in\widehat A}f(\phi)g(\phi^{-1}\chi)
\]
and we have the important relation
\[
	\cF[f\ast g]= \cF[f]\cdot\cF[g]
\]
between convolution and pointwise multiplication.

We record the following standard but useful observation, also used in \cite{wagner2022pauli} for example, which just amounts to spelling out the definitions:
\begin{lemma}\label{lem:expval_fourier}
For a probability distribution $P$ on $\widehat A$
and $a\in A$ the expectation value
\[
	E(a;P)\coloneqq E(\operatorname{ev}_a;P) = \sum_{\chi\in \widehat A}\chi(a)P(\chi)=\cF[P](a)
\]
of the random variable $\operatorname{ev}_a$ on $\widehat A$ agrees with the Fourier transformation of $P$ evaluated at $a$.
In particular, given the state $\rho$, we have the equality
\[
	\expvalideal{s}{\rho}=\cF[p(\blank\vert\rho)](s)
\]
for the expectation value of $s\in S$.
\end{lemma}

\begin{notation}\label{not:conv}
	\begin{itemize}
	\item 
	When several variables are around, the convolution with respect to one of these will be indicated by a subscript:
	\(
	(f(a,b)\ast_a g(a,b))(x)=(f(\blank,b) \ast g(\blank,b))(x).
	\)
	\item
	Given a linear operator $A\colon V\rightarrow W$ to another Hilbert space $W$ and $\rho\in \cB(V)$ we write
	\[
		A_\ast(\rho)\coloneqq A\rho A^\dagger\in \cB(W)
	\]
	and note that this only depends on $A$ up to multiplication with unit scalars. In particular, the projective unitary group $PU(V)=U(V)/U(1)$ and hence the index group $G$ acts
	on $\cB(V)$ via
	\[
	g.\rho \coloneqq e\rho e^{-1}=e_{\ast}(\rho),
	\]
	where $e\in E$ is any element with $g=[e]$. By slight abuse we will mix these notations and write terms like $(gA)_\ast(\rho)$.
	\item
	For a finite set $X$ the $\bC$-linearization of $X$ is denoted by $\bC\{X\}$.
	It is the complex inner product space with orthonormal basis $\{\ket x\}_{x\in X}$ given by $X$.
	The corresponding dual basis consisting of linear forms is $\{\bra x\}_{x\in X}$.
	\end{itemize}
\end{notation}

\subsection*{Faulty measurements}

In order to reason about faulty measurements as single high level gadgets in the quantum circuit model,
we will define them operationally as \emph{quantum instruments}, i.e.\
quantum channels (completely positive and trace preserving linear maps)
\(
	\mathcal I\colon\cB(V)\rightarrow\cB(V\otimes \bC\{X\})
\)
of the form
\[
	\mathcal I(\rho)= \sum_{x\in X} \cE_x(\rho)\otimes\ket{x}\bra{x}.
\]
 Here the \emph{component maps} $\cE_x\colon \cB(V)\rightarrow \cB(V)$ are implicitly determined as the composition
\[
	\cE_x = (\Id_V\otimes \bra x)_{\ast}\circ \mathcal I.
\]
We recall that $\mathcal I$ describes a generalized measurement as follows:
given a state $\rho$, the probability of obtaining the result $x\in X$ and the corresponding post-measurement state are
\[
	p_{\mathcal I}(x\vert \rho) \coloneqq \tr{\cE_x(\rho)}\quad\text{respectively}\quad\rho^{\mathcal I}_x\coloneqq\frac{1}{p_{\mathcal I}(x\vert \rho)}\cE_x(\rho),
\]
where consideration of $\rho^{\mathcal I}_x$ implicitly assumes non-vanishing probability. The instrument will usually be omitted from the notation and we simply write $\tilde p(x\vert \rho)$ and $\rho_x$.

Let \(\Phi\in U(V\otimes \linstab)\) be the unitary operator uniquely specified by
\[
	 \Phi(v\otimes \ket{\phi})\coloneqq \sum_{\chi\in\widehat S}\pi_{\chi}(v)\otimes \ket{\phi^{-1}\chi}
\]
for $v\in V$ and $\phi \in \widehat S$. This is the abstract result of performing an ancilla circuit (at least with the ancilla-factor initialized to \ket 1).

\label{def:faulty_measurement}
\begin{definition}
	The \emph{faulty measurement} $\widetilde M_{\cE}$ with internal error channel
	\[
		\cE\colon\cB(V\otimes\linstab)\rightarrow \cB(V\otimes \linstab)
	\]
	is defined as the quantum instrument given by the composition
	\[
		\cB(V)\overset{\Phi_{\anc}}{\longrightarrow} \cB(V\otimes \linstab)\overset{\cE}{\longrightarrow}\cB(V\otimes \linstab)\overset{M}{\longrightarrow}\cB(V\otimes \linstab)
	\] of CPTP maps, where
	\(
		\Phi_{\anc}=\Phi(-\otimes \ket 1)_\ast
	\)
	and
	\(
		M= \sum_{\chi}(\Id_V\otimes \ket{\chi}\bra{\chi})_{\ast}
	\)
	is the measurement in the $\widehat S$-basis. For $s \in S$ we denote by
	\[
	\expvalnoisy{s}{\rho}\coloneqq E(\ev_s;\tilde p(\blank\vert\rho))=\cF[p(\blank\vert\rho)](s)
	\]	
	the expectation value with respect to the faulty measurement (cf.\ Lemma~\ref{lem:expval_fourier}).
\end{definition}

From now on we restrict to \emph{diagonal error channels}, the generalization of Pauli channels.
The shift and multiply operators
\(
	X_{\phi}, Z_s\in U(\bC\{\widehat S\})
\)
defined for $\phi\in\widehat S$, $s\in S$ by
\[
	X_{\phi}(\ket\chi)\coloneqq\ket{\phi\chi},\quad Z_s(\ket\chi)\coloneqq\chi(s)\ket{\chi}
\]
generate an error group $E_S\leq U(\bC\{\widehat S\})$ with $E_S/(E_S\cap U(1))\cong\widehat S\times S$ as the associated index group.
A diagonal channel $\cE$ on $V\otimes \bC\{\widehat S\}$ is a channel of the form
\[
	\cE = \sum_{(g,\phi,s)\in G\times \widehat S\times S}P(g,\phi,s) (g\otimes (\phi, s))_\ast,
\]
for a probability distribution $P$ on
\(
	G\times \widehat S\times S.
\)

\begin{rem}
	The $X$ and $Z$ operators form a unitary error basis on $\bC\{\widehat S\}$.
	If $E\leq U(V)$ is an actual error group in the sense of Remark~\ref{rem:errorgroups}, then so is
	$E\otimes E_S\leq U(V\otimes \bC\{\widehat S\})$ with associated  index group $G\times \widehat S\times S$.
	In that case the channel considered as an endomorphism of $\cB(V\otimes \bC\{\widehat S\})$ is diagonal with respect to a unitary error basis indexed by $G\times \widehat S\times S$.
\end{rem}
The $Z$-errors are automatically averaged out in the faulty measurement, as one sees from the formula below.

\begin{prop}\label{prop:faulty_measurement}
	For a diagonal error channel the equality
	\[
		\widetilde M_\cE(\rho)=\cE(\Phi_{\anc}(\rho))= \sum_{g\in G, \phi\in \widehat S}P(g,\phi)(g\pi_{\phi^{-1}\chi})_\ast(\rho)\otimes \ket{\chi}\bra{\chi}
	\]
	holds, where $P(g,\phi)= \sum_{s\in S}P(g,\phi,s)$.
\end{prop}
\begin{proof}
	We simply write out the definitions and note that in the second to last equality the $\chi(s)$-factors introduced by $Z_s$ have canceled out:
	
	\begin{align*}
		\widetilde M_\cE(\rho)&=M\left(\cE\left(\sum_{\chi}\pi_{\chi}(v)_\ast(\rho)\otimes \ket{\chi}\bra{\chi}\right)\right)\\
		&=M\left(\sum_{g, s, \phi, \chi}P(g,\phi,s)(g\pi_{\chi}(v))_\ast(\rho)\otimes (\phi, s)_\ast(\ket{\chi}\bra{\chi})\right)\\
		&=M\left(\sum_{g, s, \phi, \chi}P(g,\phi,s)(g\pi_{\chi}(v))_\ast(\rho)\otimes X_\phi Z_s\ket{\chi}\bra{\chi}Z_{s^{-1}}X_{\phi^{-1}}\right)\\
		&=M\left(\sum_{g, s, \phi, \chi}P(g,\phi,s)(g\pi_{\chi}(v))_\ast(\rho)\otimes \ket{\phi\chi}\bra{\phi\chi}\right)\\
		&= \sum_{g,\phi,\chi}P(g,\phi)(g\pi_{\phi^{-1}\chi})_\ast(\rho)\otimes \ket{\chi}\bra{\chi}.
	\end{align*}
\end{proof}

So there is no loss of generality in the following

\begin{definition}\label{def:faulty_measurement_pauli}
	The faulty measurement $\widetilde M_P$ with internal probability distribution $P$ on $G\times \widehat S$
	is the faulty measurement with internal error channel 
	\[
		\cE_P = \sum_{g\in G,\phi\in\widehat S}P(g,\phi) (g, \phi)_\ast.
	\]
\end{definition}

\subsection*{Behavior of faulty measurements}
With all the necessary setup in place we now turn towards a more quantitative analysis of the behavior of faulty measurements.
\begin{prop}\label{prop:probability_expectedvalues}
	The probability of obtaining $\chi\in \widehat S$ is given by the convolution
	\[
		\tilde p(\chi\vert \rho)=(P\ast p(\blank\vert \rho))(\chi),
	\]
	where $P(\phi)=\sum_{\in G} P(g,\phi)$. In this case $\rho$ is mapped to the new state
	\[
		\rho_\chi = \frac{1}{\tilde p(\chi\vert\rho)}\sum_{g\in G,\phi\in \widehat S}P(g,\phi)(g\pi_{\phi^{-1}\chi})_\ast(\rho).
	\]
	Moreover, for $s\in S$ the expectation values with respect to ideal and faulty measurement are related via the formula
	\[
			\expvalnoisy{s}{\rho} = E(\ev_s;P)\expvalideal{s}{\rho}.
	\]
\end{prop}
\begin{proof}
	This amounts to applying the trace in the formula of Proposition~\ref{prop:faulty_measurement}.
	The statement about the expectation values (see Lemma~\ref{lem:expval_fourier}) is a corollary,
	since the Fourier transformation turns the convolution into a pointwise product.
\end{proof}

\begin{cor}\label{cor:inverse_fourier}
The ideal probability distribution can be reconstructed via the inverse Fourier transformation:
\begin{align*}
	p(\chi\vert\rho) &= \cF^{-1}[s\mapsto E(\ev_s;P)^{-1}\expvalnoisy{s}{\rho}](\chi) \\
	&= \frac{1}{\vert S\vert}\sum_{s\in S} \overline{\chi(s)}E(\ev_s;P)^{-1}\cdot \expvalnoisy{s}{\rho}.
\end{align*}

\end{cor}

Performing two consecutive faulty measurements with respect to commuting stabilizer groups can formally be regarded as a single measurement in the following sense. By the composition of two quantum instruments
\(
	\mathcal I\colon\cB(V)\rightarrow\cB(V\otimes \bC\{X\})
\)
and
\(
\mathcal J\colon\cB(V)\rightarrow\cB(V\otimes \bC\{Y\})
\)
we mean the quantum instrument
\[
	\mathcal J \diamond \mathcal I\colon\cB(V)\rightarrow\cB(V\otimes \bC\{Y\times X\})
\]
with component maps
\[
	\cE^{\mathcal J \diamond \mathcal I}_{y,x}(\rho)=\cE^{\mathcal J}_y(\cE^{\mathcal I}_x(\rho))
\]
for $(y,x)\in Y\times X$.
Up to identification this is just the composition of the CPTP-maps $\mathcal J\otimes \Id_{\bC\{X\}}$ and $\mathcal I$.

Let $T$ be another stabilizer group whose action commutes with that of $S$ (i.e.\ $stv=tsv$ for all $s\in S$, $t\in T$, $v\in V$) and let $Q$ be a probability distribution on $G\times T$. We recall that since the index group $G$ is abelian, there is a commutator pairing
\(
	\omega:G\times G\rightarrow U(1).
\)
In particular, for a fixed $g\in G$ it restricts to a character
\[
\omega(g,\blank)\colon S\rightarrow U(1)
\]
on $S$ and for $\chi\in \widehat S$ the identity
\[
g\pi_\chi=\pi_{\omega(\blank,g)\chi} g
\]
holds. In the following statement we indicate the stabilizer groups with additional suggestive subscripts.
\begin{prop}\label{prop:composition_measurements}
The composition
\(
	\widetilde M_{T;Q}\diamond\widetilde M_{S;P}
\)
of quantum instruments is the faulty measurement
\[
	\widetilde M_{T\times S;Q\ast^{\omega}P} \colon \cB(V)\longrightarrow\cB(V\otimes\bC\{\widehat T\times \widehat S\})
\]
with internal probability distribution $Q\ast^{\omega} P$ defined on 
\(
G\times \widehat T\times \widehat S
\)
as the '$\omega$-twisted' convolution
\[
\quad (Q\ast^{\omega} P)(g,(\phi_2,\phi_1))= \sum_{k\in G} Q(gk^{-1},\omega(k,\blank)\phi_2)P(k,\phi_1).
\]
\end{prop}
\begin{proof}
Using the identity $g\pi_\chi =	\pi_{\omega(\blank,g)\chi} g$, it follows by direct calculation from the formula in Proposition \ref{prop:faulty_measurement} that the component map indexed by $(\chi_2,\chi_1)\in \widehat T\times \widehat S$ is given by
\[
	\sum_{g_1,g_2\in G, \phi_1\in \widehat S, \phi_2\in \widehat T} Q(g_2, \phi_2\omega(g_1,-))P(g_1,\phi_1)(g_2g_1\pi^T_{\phi_2^{-1}\chi_2}\pi^S_{\phi_1^{-1}\chi_1})_{\ast}.
\]
We note that $\pi^T_\chi\pi^S_{\eta}=\pi^{T\times S}_{(\chi, \eta)}$ is the orthogonal projection with respect to the action of $T\times S$, as can be seen from the projection formula in Proposition~\ref{prop:isotypical_decomposition}. After reindexing the sum we end up with the component map of the faulty measurement described in the statement above.
\end{proof}

\begin{lemma}
	For $s\in S$ the Fourier transformation of
	\(
		\phi \mapsto \phi(s)p(\phi\vert \rho)
	\)
	is given by
	\[
		\cF[\ev_s\cdot p(\blank\vert\rho)](t)=\expvalideal{st}{\rho}.
	\]
\end{lemma}
\begin{proof}
	This is an instance of the general formula $\cF[\ev_s\cdot f](t)=\cF[f](st)$.
\end{proof}

We now establish the main result of this section about the relationship between expectation values before and after measurement.
It is expressed via further expectation values with respect to the conditional probability distributions
$P_{\phi}$ on $G$ defined by
\[
	P_\phi(g) =\frac{P(g,\phi)}{P(\phi)},
\]
namely
\[
	E(s;P_\phi)\coloneqq E(\omega(s,\blank); P_\phi)=\frac{1}{P(\phi)}\sum_{g\in G} \omega(s,g)P(g,\phi).
\]

\begin{thm}\label{thm:expected_values}
	For $s\in S$ and $\chi \in \widehat S$ the ideal expectation value of the conditional state $\rho_\chi$ can be written as the convolution
	\begin{align*}
		\expvalideal{s}{\rho_\chi}\tilde p(\chi\vert\rho) &= \left[E(s;P_\phi)P(\phi))\ast_{\phi}(\phi(s)p(\phi\vert\rho)\right](\chi) \\
		&= \left(E(s;P_\phi)P(\phi))\ast_{\phi}\cF^{-1}[\expvalideal{s\cdot(\blank)}{\rho}](\phi)\right)(\chi).
	\end{align*}
	In particular, for a state $\rho$ with nowhere vanishing $s\mapsto \expvalideal{s}{\rho}$ the formula

	\begin{align*}
		\expvalideal{s}{\rho_\chi} &= \cF^{-1}[t\mapsto \frac{\cF[\psi \mapsto \expvalideal{s}{\rho_\psi}\tilde p(\psi\vert\rho)](t)}{\expvalideal{st}{\rho}}](\chi) \\
		&= \frac{1}{\vert S\vert}\sum_{t\in S, \phi \in \widehat S}\bar \chi(t)\phi(t)\frac{\expvalideal{s}{\rho_\phi}\tilde p(\phi\vert\rho)}{\expvalideal{st}{\rho}}
	\end{align*}
	holds.
\end{thm}
\begin{proof}
	Using $se\pi_\eta=\omega(s,g)\eta(s)e\pi_\eta$ for $g=[e]\in G, \eta \in \widehat S$ we calculate
	\begin{align*}
		E(s\vert \rho_\chi)\tilde p(\chi\vert\rho) &= \tr\left( \sum_{g,\phi}P(g,\phi)s(g\pi_{\phi^{-1}\chi})_\ast(\rho) \right) \\
		&= \tr\left( \sum_{g,\phi}P(g,\phi)\omega(s,g)(\phi^{-1}\chi)(s)(\pi_{\phi^{-1}\chi})_\ast(\rho) \right) \\
		&= \sum_{g,\phi}P(g,\phi)\omega(s,g)(\phi^{-1}\chi)(s)p(\phi^{-1}\chi\vert\rho) \\
		&= \left( \sum_{g}P(g,\phi)\omega(s,g)\ast_\phi((\phi)(s)p(\phi\vert\rho)) \right)(\chi) \\
	\end{align*}
	to obtain the convolution formula. The inverse Fourier transformation then appears due to the previous lemma.
\end{proof}
\begin{rem}
	This can also be deduced from Proposition~\ref{prop:composition_measurements}, where the second measurement is taken to be the ideal stabilizer measurement with respect to $T=S$.
\end{rem}
\begin{rem}	\label{rem:qubits}
	We make the above explicit in the case $V=(\bC^2)^{\otimes n}$ and $E=P_n$ the Pauli group with stabilizer subgroup $S=\langle S_1,\ldots, S_m\rangle$ generated by fixed generators determining an isomorphism
	\[
		\ftwo^m\overset{\cong}{\longrightarrow} S, \quad a=(a_1,\ldots,a_m) \mapsto S(a)\coloneqq S_1^{a_1}\cdots S_m^{a_m}.
	\]
	The character group $\ftwo^m\cong\widehat{\ftwo^m}$ is identified via $a\mapsto \chi_a$, where $\chi_a(b)=(-1)^{a\cdot b}$.	
	In this sense the evaluation pairing $\widehat S\times S\rightarrow U(1)$ corresponds to the inner product on $\ftwo^m$.
	Fourier transformation and convolution simplify to
	\[
		\cF[f](a)=\sum_{x\in \ftwo^m}(-1)^{a\cdot x}f(x)\quad\text{respectively}\quad (f\ast g)(x)=\sum_{u\in \ftwo^m}	f(u+x)g(u).
	\]
	We have so far carefully distinguished the conceptual roles of index and error group. To connect to the main body of text, we now implicitly choose a unitary error basis $E'\subset E$ indexed by $G$ (e.g.\ the standard $X$- and $Z$-operators) and instead of $g\in G$ let the variable $e$ run over $E'$ in the following formulas.
	The expressions of Proposition \ref{prop:probability_expectedvalues} for probabilities and post-measurement state then take the form (with $x\in\ftwo^m$)
	\[
		\tilde p(x\vert \rho)=(P\ast p(\blank\vert \rho))(x)= \sum_{u \in \ftwo^m}P(u)p(x+u)
	\]	
	and
	\[
		\rho_x = \frac{1}{\tilde p(x\vert\rho)}\sum_{e\in E',u\in \ftwo^m}P(e,u)(e\pi_{x+u})_\ast(\rho).
	\]
	The inversion formula of Theorem \ref{thm:expected_values} for $a,u\in \ftwo^m$ reads as:
	\[
		E(a; P_u)P(u)=\frac{1}{2^m}\sum_{b, x\in\ftwo^m}(-1)^{(u+x)b}\frac{\expvalideal{S(a)}{\rho_x}\tilde p(x\vert\rho)}{\expvalideal{S(a+b)}{\rho}}.
	\]
\end{rem}

To conclude, we explain how some interpolation between the extreme cases of recording every single syndrome or averaging over all can be achieved. For a subset $X\subseteq \ftwo^m$ consider the state
\[
	\rho_X\coloneqq \sum_{x\in X}\rho_x\tilde p(x\vert \rho)=\frac{1}{\tilde p(X\vert \rho)}\sum_{x\in X, e, u}P(e,u)(e\pi_{x+u})_\ast\rho
\]
obtained in the situation that a result in $X$ was measured but the specific one not recorded. If $X$ is suitably choosen, the inversion formula can be generalized.
\begin{cor}
	Let $W\subseteq \ftwo^m$ be a $k$-dimensional $\bF_2$-linear subspace. Writing
	\(
		[u] = u+W
	\)
	for the cosets we have the equality
	\[
		E(a; P_{[u]})P([u])=2^{k-m}\sum_{b\in W^\perp,[x]\in \bF_2^m/W}(-1)^{(u+x)b}\frac{E(a\vert \rho_{[x]})\tilde p([x]\vert\rho)}{E(a+b\vert \rho)}.
	\]
	In particular, for $W=\ftwo^m$ this simplifies to
	\[
	E(a; P_{\ftwo^m})=\frac{E(a\vert \rho_{\ftwo^m})}{E(a\vert \rho)}.
	\]
\end{cor}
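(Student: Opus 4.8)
The plan is to derive the corollary directly from the preceding theorem by summing its identity over the coset $[u] = u + W$ and invoking orthogonality of characters on the subgroup $W$. The first step is to record that the coset-level quantities are genuinely additive over cosets. Writing $g_a(u) = E(a;P_u)P(u)$ and $f_a(x) = E(a\vert\rho_x)\tilde p(x\vert\rho)$ as in the theorem, the definition of the conditional distribution $P_{[u]}$ gives $E(a;P_{[u]})P([u]) = \sum_{u'\in[u]} g_a(u')$, since $P([u]) = \sum_{u'\in[u]}P(u')$ and $P_{[u]}(e) = P([u])^{-1}\sum_{u'\in[u]}P(e,u')$. Dually, because $\rho_{[x]}\tilde p([x]\vert\rho) = \sum_{x'\in[x]}\rho_{x'}\tilde p(x'\vert\rho)$ is the \emph{unnormalized} coset sum and $S(a)\mapsto \tr(S(a)\,\cdot\,)$ is linear, one gets $E(a\vert\rho_{[x]})\tilde p([x]\vert\rho) = \sum_{x'\in[x]} f_a(x')$. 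I will abbreviate these coset sums as $g_a([u])$ and $f_a([x])$.

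With these identities in hand, the main computation is short. Summing the theorem's formula $g_a(u') = 2^{-m}\sum_{b,x}(-1)^{(u'+x)b}f_a(x)/h_a(b)$ over $u'\in u+W$, the phase splits as $(-1)^{(u+x)b}(-1)^{wb}$, and the inner sum over the subgroup, $\sum_{w\in W}(-1)^{wb}$, equals $2^k$ when $b\in W^\perp$ and vanishes otherwise. This collapses the $b$-sum to $W^\perp$ and produces the prefactor $2^k/2^m = 2^{k-m}$. Next, for $b\in W^\perp$ the phase $(-1)^{(u+x)b}$ is constant on each coset of $W$, so I can regroup the $x$-sum coset by coset; the inner sum $\sum_{x'\in[x]}f_a(x')$ assembles precisely into $f_a([x])$, yielding
\[
g_a([u]) = 2^{k-m}\sum_{b\in W^\perp,\,[x]\in\ftwo^m/W}(-1)^{(u+x)b}\frac{f_a([x])}{h_a(b)},
\]
which is the claimed identity after substituting $h_a(b) = E(a+b\vert\rho)$.

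The special case $W = \ftwo^m$ then follows by inspection: here $k = m$, $W^\perp = \{0\}$, there is a single coset, and $\tilde p(\ftwo^m\vert\rho) = P(\ftwo^m) = 1$, so with $h_a(0) = E(a\vert\rho)$ the formula reduces to $E(a;P_{\ftwo^m}) = E(a\vert\rho_{\ftwo^m})/E(a\vert\rho)$. I expect the only real obstacle to be bookkeeping rather than substance: one must pin down the coset-level definitions carefully enough that both additivity statements hold, in particular that $\rho_{[x]}\tilde p([x]\vert\rho)$ is the unnormalized sum and that $(-1)^{(u+x)b}$ is well defined as a function of the coset $[x]$ precisely when $b\in W^\perp$. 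Once these are in place the character-sum manipulation is routine.
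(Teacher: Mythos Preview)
Your proposal is correct and follows exactly the approach indicated in the paper: the paper's proof is the single sentence ``This amounts to summing up both sides of the formula over a coset,'' and what you have written is precisely the detailed execution of that summation, including the character-orthogonality step over $W$ and the coset regrouping of the $x$-sum. The additivity identities you verify for $g_a([u])$ and $f_a([x])$ are the natural unpacking of the coset-level definitions preceding the corollary, so there is no divergence in method.
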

\begin{proof}
	This amounts to summing up both sides of the formula over a coset. 
\end{proof}
\begin{rem}\label{ex:maxlike}
	We spell out some details regarding the maximum-likelihood decoding in Section~\ref{sec:depolarizing_noise}.
	Suppose that the Pauli channel $\cE$ is described by the probabilities $Q(e)$. 
	Via syndrome measurements we have access to the expectation values (for a stabilizer state $\rho$)
	\[
		\expvalideal{s}{\cE(\rho)}=\sum_e Q(e)(-1)^{(s,e)}=\sum_y\left(\sum_{\Syn e = y}Q(e)\right)(-1)^{a\cdot y}
	\]
	and hence the syndrome class probabilities $Q_{\Syn}(y)=\sum_{\Syn e=y}Q(e)$ (as in Eqs.~(\ref{eq:syndromebetas}) and (\ref{eq:syndromebetas2})). 
	A direct computation shows that the post-measurement state takes the form
	\[
	\cE(\rho)_x = \sum_e (P\ast^{\Syn}Q)_x(e)e\rho e^{\dagger},
	\]
	where
	\[
	(P\ast^{\Syn}Q)(e,x)\coloneqq \sum_f P(ef^\dagger, x+\Syn(f))Q(f)
	\]
	is a `syndrome-twisted' form of convolution over the Pauli group (also see Proposition~\ref{prop:composition_measurements}). Hence 
	\[
		p_x(y)\coloneqq(P\ast^{\Syn}Q)_{\Syn,x}(y)=\frac{\sum_{\Syn e = y}(P\ast^{\Syn}Q)(e,x)}{(P\ast^{\Syn}Q)(x)}
	\]
	describes the conditional probability that an overall error with syndrome $y$ has occurred. By inspection, this only depends on the distribution of the random variable $\Syn$ with respect to $Q$ and $P_u$. Namely, it is the convolution over $\ftwo^m$ (cf.\ Notation~\ref{not:conv})
	\begin{align*}
			(P\ast^{\Syn}Q)_{\Syn}(y,x) &= (P_{Syn}(u, x+y+u)\ast_u Q_{\Syn} (u))(y)	\\
			&= \sum_u\left(\left(\sum_{\Syn e = y + u}P(e, x + u)\right)\left(\sum_{\Syn e = u}Q(e)\right)\right).
	\end{align*}
\end{rem}

\section{The hardware graph of the ibm\_hanoi backend}
\label{app:hardwaregraph}
The connectivity of the used IBM backends is given by a `Heavy Hexagon'-graph, see Fig.~\ref{fig:backend}. The nodes correspond to qubits and the edges indicate the possibility to apply CNOT gates to the connected qubits. The maximum valence of any qubit is 3.
\begin{figure}[tp]
	\centering
	\begin{overpic}[width=0.5\linewidth]{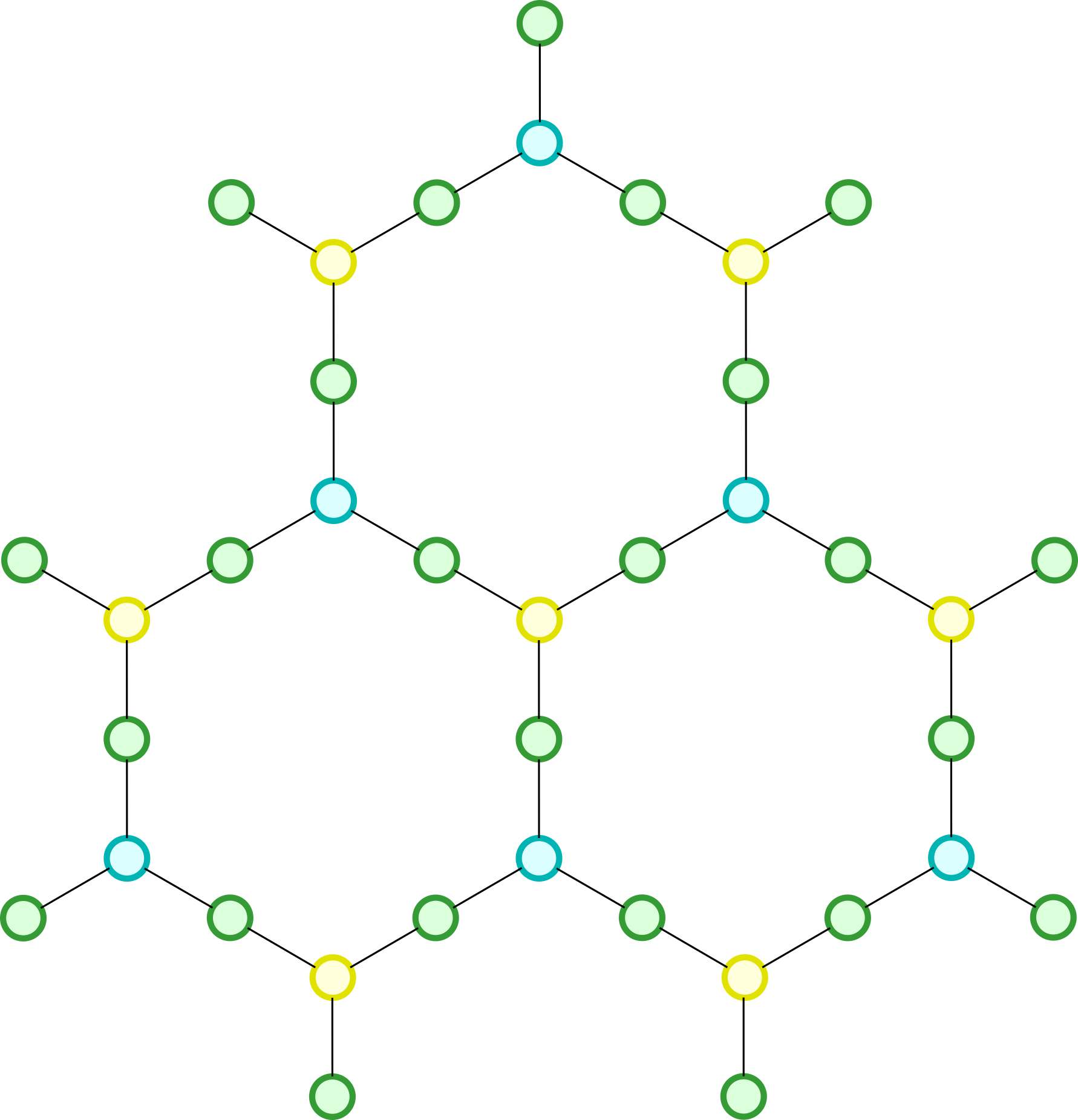}
		
	\end{overpic}
	\caption{The `Heavy Hexagon'-graph giving the topology of the IBM backends. The nodes correspond to qubits and the edges indicate the possibility to apply CNOT gates to the connected qubits. The color only highlights the structure of the graph.}
	\label{fig:backend}
\end{figure}

We need to find a way to map circuit-qubits to the backend and implement operations that realize the stabilizer measurements, while leaving the embedding map invariant, i.e. undoing any necessary swap-operations. One possible way to achieve this is given by the operations schedule in Fig.~\ref{fig:schedule}. 

\begin{figure}[tp]
	\centering
	\begin{overpic}[width=0.5\linewidth]{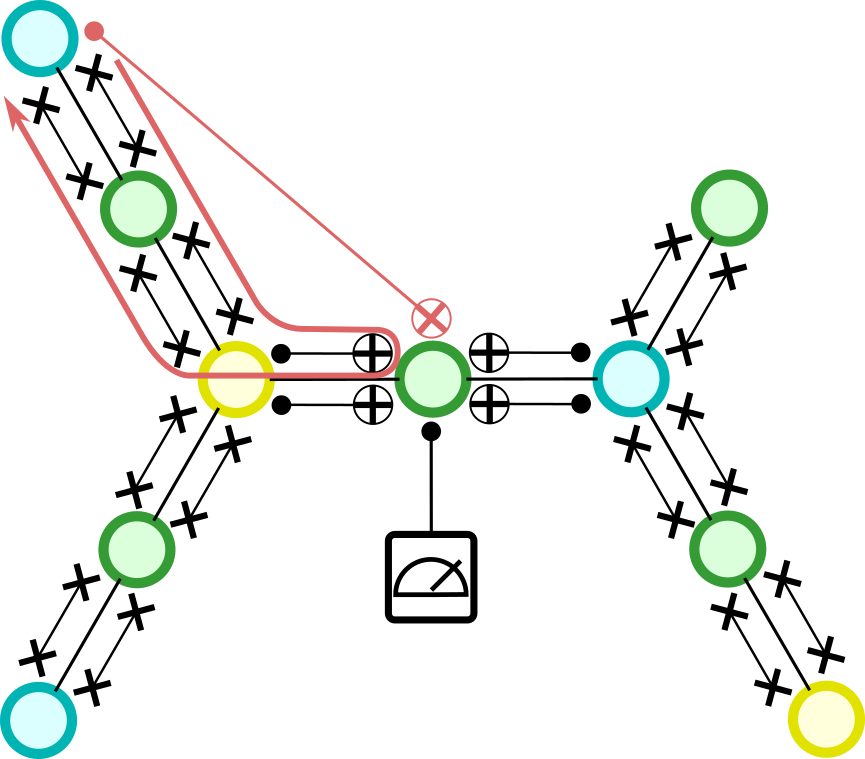}
		\put(2,82){$D_6$}
		\put(13.5,62.5){$D_2$}
		\put(13.5,23){$D_1$}
		\put(2,3.5){$D_5$}
		\put(81.5,62.5){$D_3$}
		\put(81.5,23){$D_0$}
		\put(93,3.5){$D_4$}
		\put(25,43){$A_1$}
		\put(47.5,43){$A_0$}
		\put(70,43){$A_2$}
	\end{overpic}
	\caption{An operations schedule to implement the measurement of the stabilizer operator $S_4 = \1\1\1 ZZZZ$. The operations are carried out in the order indicated by the red arrow, which in sum realize the CNOT gate likewise indicated in red. The information stored in the data qubits $D_i$ is swapped to one of the `edge' ancillas $A_1$ or $A_2$, which is then coupled via a CNOT operation with the central ancilla $A_0$. Afterwards, the swap-operations are undone to recover the original qubit embedding. Finally, measuring the ancilla qubit $A_0$ implements the stabilizer measurement.}
	\label{fig:schedule}
\end{figure}

To implement this procedure on the real backend, we map the topology of a quantum circuit to the backend, then, in case there are multiple possible matches, choose one such that the total CNOT error along the embedded graph is minimized.

\section{Calibration and correction algorithms}
\label{app:algos}

\begin{algorithm}[H]
	\caption{Calibration}\label{alg:calibration}
	\begin{algorithmic}[1]
		\linespread{1.5}\selectfont
		\Procedure{CalibrationExperiment}{$m,\ket{\psi}, \mathcal{C}$} \Comment{See also Fig.~\ref{fig:calibration}.}
			\State \textbf{Input:} number of bits in the measurement outcome $m$, input state $\ket{\psi}$, Syndrome measurement circuit $\mathcal{C}$.
			\State \textbf{Output: } Measurement statistics for the first and second measurement in the calibration experiment.
			\State $\rho_{\mathrm{cal.}} \gets \proj{\psi}$
			\State $P\gets \Call{OutcomeDistribution}{\textbf{run}\; \mathcal{C}^2(\rho_{\mathrm{cal}})}$ \Comment{probability distribution of the results of the two syndrome measurements}
			\ForAll{$x\in\{0,1,...,2^m-1\}$}
				\State $\tilde{p}(x|\rho_{\mathrm{cal.}}) \gets$ $P$(1st outcome = $x$)
				\ForAll{$y\in\{0,1,...,2^m-1\}$}
					\State $\tilde{p}(y|\rho_{\mathrm{cal.},x}) \gets$ $P$(2nd outcome = $y$ $\vert$ 1st outcome = $x$)		
				\EndFor
			\EndFor
			\State\Return $(\tilde{p}(x|\rho_{\mathrm{cal}}),\; \tilde{p}(y|\rho_{\mathrm{cal.},x}))$
		\EndProcedure\vspace*{3ex}
		\Procedure{CalibrationParameters}{$m,\tilde{p}(x|\rho_{\mathrm{cal}}), \tilde{p}(y|\rho_{\mathrm{cal},x}), S(a)$}
		\State \textbf{Input:}  number of bits in the measurement outcome $m$, the measurement statistics of the calibration experiment $\tilde{p}(x|\rho_{\mathrm{cal}})$ and $\tilde{p}(y|\rho_{\mathrm{cal},x})$,
		the stabilizer elements $S(a)$ for all $a\in\{0,1,...,2^m-1\}$.
		\State \textbf{Output:} The calibration parameters $\alpha_{S(a)}$, $\beta_{S(a)}$, $\beta_{S(a),u}$, $\gamma_{S(a)}$.
		\State $(\tilde{p}(x|\rho_{\mathrm{cal}}), \tilde{p}(y|\rho_{\mathrm{cal},x}))  \gets \Call{CalibrationExperiment}{m,\ket{\psi}, \mathcal{C}}$
		
		\ForAll{$a\in\{0,1,...,2^m-1\}$}
		\State $\langle S(a)\rangle_{\mathrm{cal.}} \gets \bra{\psi} S(a) \ket{\psi}$ \Comment{Calculate ideal value analytically}
		\State $\widetilde{\langle S(a)\rangle}_{\mathrm{cal.}}^{(1)} \gets \sum_{x=0}^{2^m-1} (-1)^{x\cdot a} \tilde{p}(x|\rho_{\mathrm{cal.}})$ \Comment{1st expectation value from measured data}
		\ForAll{$x\in\{0,1,...,2^m-1\}$}
		\State $\widetilde{\langle S(a)\rangle}_{\rho_{\mathrm{cal.},x}} \gets \sum_{y=0}^{2^m-1} (-1)^{y\cdot a} \tilde{p}(y|\rho_{\mathrm{cal.,x}})$ \Comment{2nd expectation value depending on outcome} 
		\EndFor
		\State $\widetilde{\langle S(a)\rangle}_{\mathrm{cal.}}^{(2)} \gets \sum_{x=0}^{2^m-1} \tilde{p}(x|\rho_{\mathrm{cal.}}) \widetilde{\langle S(a)\rangle}_{\rho_{\mathrm{cal.},x}}$ \Comment{2nd expectation value, averaged over outcomes}
		\EndFor
		\ForAll{$a\in\{0,1,...,2^m-1\}$}
		\State $\beta_{S(a)} \gets \frac{\widetilde{\langle S(a)\rangle}_{\mathrm{cal.}}^{(2)}}{\widetilde{\langle S(a)\rangle}_{\mathrm{cal.}}^{(1)}}$ \Comment{See Eq.~(\ref{eq:betagamma})}
		\State $\gamma_{S(a)} \gets \frac{\widetilde{\langle S(a)\rangle}_{\mathrm{cal.}}^{(1)}}{\langle S(a)\rangle_{\mathrm{cal.}}}$
		\State $\alpha_{S(a)} \gets \frac{\beta_{S(a)}}{\gamma_{S(a)}}$
		\Comment{See Eq.~(\ref{eq:alpha})}
		\ForAll{$u\in\{0,1,...,2^m-1\}$}
		\State $\beta_{S(a),u} \gets \frac{1}{2^m}\sum_{b,x}(-1)^{(u\oplus x)\cdot b}\frac{\widetilde{\langle S(a)\rangle}_{\rho_{\mathrm{cal.},x}} \tilde{p}(x\vert\rho_{\mathrm{cal.}})}{\gamma_{S(a)} \langle S(a\oplus b) \rangle_{\mathrm{cal.}} }$
		\Comment{See Eq.~(\ref{eq:betaSau_cal})}
		\EndFor
		\EndFor			
		\State \Return $(\alpha_{S(a)}, \beta_{S(a)}, \beta_{S(a),u}, \gamma_{S(a)} )$
		\EndProcedure
	\end{algorithmic}
\end{algorithm}	

\begin{figure}
	\begin{algorithm}[H]
		\caption{Syndrome Measurement Correction (Averaged)}\label{alg:correctionaverage}
		\begin{algorithmic}[1]
			\linespread{1.5}\selectfont
			\Procedure{CorrectedExpectationValue}{$\tilde{p}(x|\rho)$, $S(a)$, $m$, $\alpha_{S(a)}$}
			\State \textbf{Input:} the observed probabilities for every outcome $\tilde{p}(x|\rho)$, the stabilizer element $S(a)$, the number of bits in the outcome $m$, the calibration parameters $\alpha_{S(a)}$.
			\State \textbf{Output:} The ideal expectation value of $S(a)$ after the measurement. 
			\State $\widetilde{\langle S(a)\rangle}_{\rho} \gets \sum_{x=0}^{2^m-1} (-1)^{x\cdot a} \tilde{p}(x|\rho)$
			\State	$\langle S(a)\rangle_{\rho} \gets \alpha_{S(a)} \widetilde{\langle S(a)\rangle}_{\rho} $
			\Comment{See Eq.~(\ref{eq:correctedexpectationvalue})}
			\State \Return $\langle S(a)\rangle_{\rho}$
			\EndProcedure
		\end{algorithmic}
	\end{algorithm}	
\end{figure}

\begin{figure}
	\begin{algorithm}[H]
		\caption{Syndrome Measurement Correction}\label{alg:correction}
		\begin{algorithmic}[1]
			\linespread{1.5}\selectfont
			\Procedure{CorrectedExpectationValue}{$x$, $\tilde{p}(x|\rho)$, $S(a)$, $m$, $\beta_{S(a),u}$}
				\State \textbf{Input:} the observed syndrome $x$, the observed probabilities for every outcome $\tilde{p}(x|\rho)$, the stabilizer element $S(a)$, the number of bits in the outcome $m$, the calibration parameters $\beta_{S(a),u}$.
				\State \textbf{Output:} The ideal expectation value of $S(a)$ after the measurement. 
				\State $\widetilde{\langle S(a)\rangle}_{\rho} \gets \sum_{x=0}^{2^m-1} (-1)^{x\cdot a} \tilde{p}(x|\rho)$
				\State $\langle S(a)\rangle_{\rho} \gets \frac{1}{\gamma_{S(a)}}\widetilde{\langle S(a)\rangle}_{\rho}$
				\State $p(x|\rho) = \frac{1}{2^m} \sum_{b=0}^{2^m-1} \langle S(b) \rangle_{\rho} (-1)^{x\cdot b}$
				\State	$\langle S(a)\rangle_{\rho_x}
				\gets \frac{1}{\tilde{p}(x|\rho)}\sum_{u=0}^{2^m-1} (-1)^{a\cdot(x\oplus u)} p(x\oplus u|\rho)\beta_{S(a),u}$
				\Comment{See Eq.~(\ref{eq:expectationvaluegivenoutcome})} 
				\State \Return $\langle S(a)\rangle_{\rho_x}$
			\EndProcedure
		\end{algorithmic}
	\end{algorithm}	
\end{figure}

\section{Noise estimation for the $[[7,1,3]]$ Steane code}
\label{app:Steane}

\begin{figure*}
	\centering
	\begin{quantikz}[row sep=1pt, column sep=1pt]
		\lstick{1}      & \qw     & \qw      & \qw      & \qw       & \qw       & \targa{}   & \qw       & \qw       & \qw       & \qw       & \qw       & \qw       & \qw       & \qw       & \qw       &\qw     &\qw&\qw & \qw&\qw      & \qw     & \qw      & \qw      & \qw       & \qw       & \phasea{}  & \qw       & \qw       & \qw       & \qw       & \qw       & \qw       & \qw       & \qw       & \qw       &\qw     &\qw\\
		\lstick{2}      & \qw     & \qw      & \qw      & \qw       & \qw       & \qw       & \qw       & \phasec{}  & \qw       & \qw       & \qw       & \qw       & \qw       & \qw       & \qw       &\qw     &\qw&\qw&\qw&\qw      & \qw     & \qw      & \qw      & \qw       & \qw       & \qw       & \qw       & \targc{}   & \qw       & \qw       & \qw       & \qw       & \qw       & \qw       & \qw       &\qw     &\qw\\
		\lstick{3}      & \qw     & \qw      & \qw      & \qw       & \qw       & \qw       & \qw       & \qw       & \targa{}   & \qw       & \qw       & \qw       & \qw       & \qw       & \phasec{}  &\qw     &\qw&\qw&\qw&\qw      & \qw     & \qw      & \qw      & \qw       & \qw       & \qw       & \qw       & \qw       & \phasea{}  & \qw       & \qw       & \qw       & \qw       & \qw       & \targc{}   &\qw     &\qw\\
		\lstick{4}      & \qw     & \qw      & \qw      & \qw       & \qw       & \qw       & \qw       & \qw       & \qw       & \phaseb{}  & \qw       & \qw       & \qw       & \qw       & \qw       &\qw     &\qw&\qw& \qw&\qw      & \qw     & \qw      & \qw      & \qw       & \qw       & \qw       & \qw       & \qw       & \qw       & \targb{}   & \qw       & \qw       & \qw       & \qw       & \qw       &\qw     &\qw& & &\\
		\lstick{5}      & \qw     & \targa{}  & \qw      & \qw       & \qw       & \qw       & \phaseb{}  & \qw       & \qw       & \qw       & \qw       & \qw       & \qw       & \qw       & \qw       &\qw     &\qw&\qw&\qw&\qw      & \qw     & \phasea{} & \qw      & \qw       & \qw       & \qw       & \targb{}   & \qw       & \qw       & \qw       & \qw       & \qw       & \qw       & \qw       & \qw       &\qw     &\qw\\
		\lstick{6}      & \qw     & \qw      & \qw      & \phasec{}  & \qw       & \qw       & \qw       & \qw       & \qw       & \qw       & \qw       & \qw       & \qw       & \phaseb{}  & \qw       &\qw     &\qw&\qw&\qw&\qw     & \qw     & \qw      & \qw      & \targc{}   & \qw       & \qw       & \qw       & \qw       & \qw       & \qw       & \qw       & \qw       & \qw       & \targb{}   & \qw       &\qw     &\qw& & &\\
		\lstick{7}      & \qw     & \qw      & \phaseb{} & \qw       & \qw       & \qw       & \qw       & \qw       & \qw       & \qw       & \phasec{}  & \qw       & \targa{}   & \qw       & \qw       &\qw     &\qw&\qw&	\qw&\qw      & \qw     & \qw      & \targb{}  & \qw       & \qw       & \qw       & \qw       & \qw       & \qw       & \qw       & \targc{}   & \qw       & \phasea{}  & \qw       & \qw       &\qw     &\qw\\
		\lstick{\ket{0}}& \gate[style={draw=color1}, label style={color1}]{H}\qwa& \ctrla{-3}\qwa& \qwa      & \qwa      & \phase{}\qwa  & \ctrla{-7}\qwa & \qwa       & \qwa       & \ctrla{-5}\qwa & \qwa       & \qwa       & \phase{}\qwa  & \ctrla{-1}\qwa & \qwa       & \qwa       &\gate[style={draw=color1}, label style={color1}]{H}\qwa&\meter[draw=color1]{}\qwa\rstick{$S_3$} &\hphantom{S_3\;\ket{0}}  & & \lstick{\ket{0}}& \gate[style={draw=color1}, label style={color1}]{H}\qwa& \ctrla{-3}\qwa& \qwa      & \qwa      & \phase{}\qwa  & \ctrla{-7}\qwa & \qwa       & \qwa       & \ctrla{-5}\qwa & \qwa       & \qwa       & \phase{}\qwa  & \ctrla{-1}\qwa & \qwa       & \qwa       &\gate[style={draw=color1}, label style={color1}]{H}\qwa&\meter[draw=color1]{}\qwa\rstick{$S_6$}\\
		\lstick{\ket{0}}& \gate[style={draw=color2}, label style={color2}]{H}\qwb& \qwb      & \ctrlb{-2}\qwb& \qwb       & \qwb       & \qwb       & \ctrlb{-4}\qwb & \qwb       & \qwb      & \ctrlb{-5}\qwb & \qwb       & \ctrl{-1}\qwb & \qwb       & \ctrlb{-3}\qwb & \qwb       &\gate[style={draw=color2}, label style={color2}]{H}\qwb&\meter[draw=color2]{}\qwb\rstick{$S_4$}& & & \lstick{\ket{0}}& \gate[style={draw=color2}, label style={color2}]{H}\qwb& \qwb      & \ctrlb{-2}\qwb& \qwb       & \qwb       & \qwb       & \ctrlb{-4}\qwb & \qwb       & \qwb      & \ctrlb{-5}\qwb & \qwb       & \ctrl{-1}\qwb & \qwb       & \ctrlb{-3}\qwb & \qwb       &\gate[style={draw=color2}, label style={color2}]{H}\qwb&\meter[draw=color2]{}\qwb\rstick{$S_1$}\\
		\lstick{\ket{0}}& \gate[style={draw=color3}, label style={color3}]{H}& \qwc      & \qwc      & \ctrlc{-4}\qwc & \ctrl{-2}\qwc & \qwc       & \qwc       & \ctrlc{-8}\qwc & \qwc       & \qwc       & \ctrlc{-3}\qwc & \qwc       & \qwc       & \qwc       & \ctrlc{-7}\qwc &\gate[style={draw=color3}, label style={color3}]{H}\qwc&\meter[draw=color3]{}\qwc\rstick{$S_5$}& & & \lstick{\ket{0}}& \gate[style={draw=color3}, label style={color3}]{H}& \qwc      & \qwc      & \ctrlc{-4}\qwc & \ctrl{-2}\qwc & \qwc       & \qwc       & \ctrlc{-8}\qwc & \qwc       & \qwc       & \ctrlc{-3}\qwc & \qwc       & \qwc       & \qwc       & \ctrlc{-7}\qwc &\gate[style={draw=color3}, label style={color3}]{H}\qwc&\meter[draw=color3]{}\qwc\rstick{$S_2$}
	\end{quantikz}
	\caption{A circuit to extract the syndrome of the 7-qubit-Steane code, cf. \cite{Reichardt2020}. The black gates compensate for the different ordering compared to the standard sequential syndrome measurement.}\label{fig:flagqubitcircuits}	
\end{figure*}

Here we detail the estimation of the noise channel from the stabilizer measurements according to the method of Wagner \textit{et al.}~\cite{wagner2022pauli} for the  $[[7,1,3]]$ Steane code, with the syndrome measurement as implemented in Fig.~\ref{fig:flagqubitcircuits}. Our choice of the six generators of the stabilizer is shown in Table~\ref{tab:steane}.
\begin{table}[tp]%
	\caption{The stabilizer generators and logical operators of the 7-qubit-Steane code~\cite{Steane96b}.}\label{tab:steane}%
	\centering%
	\begin{tabular}{c|ccccccc}
		$S_1$ & $\1$ & $\1$ & $\1$ & $X$  & $X$  & $X$  & $X$\\
		$S_2$ & $\1$ & $X$  & $X$  & $\1$ & $\1$ & $X$  & $X$\\
		$S_3$ & $X$  & $\1$ & $X$  & $\1$ & $X$  & $\1$ & $X$\\
		$S_4$ & $\1$ & $\1$ & $\1$ & $Z$  & $Z$  & $Z$  & $Z$\\
		$S_5$ & $\1$ & $Z$  & $Z$  & $\1$ & $\1$ & $Z$  & $Z$\\
		$S_6$ & $Z$  & $\1$ & $Z$  & $\1$ & $Z$  & $\1$ & $Z$\\[1ex]
		$\bar{X}$ & $X$ & $X$ & $X$ & $X$ & $X$ & $X$ & $X$\\
		$\bar{Z}$ & $Z$ & $Z$ & $Z$ & $Z$ & $Z$ & $Z$ & $Z$
	\end{tabular}%
\end{table}%

We recall the form of the probability mass function $P\colon \mathbb P_7\rightarrow [0,1]$ for a Pauli channel with independent Pauli errors on individual qubits:
\begin{equation}
	P(e_1\otimes \cdots\otimes e_7)=P_1(e_1)\cdots P_7(e_7),\quad e_i\in \{\1, X, Y, Z\}
\end{equation}

Given a stabilizer state affected by the error channel, we also have the multiplicative relation for the expectation values
\begin{equation}
	\langle e_1\otimes \cdots\otimes e_7 \rangle = \langle e_1 \rangle\cdots \langle e_7 \rangle,
\end{equation}
where on the right-hand side we interpret $e_i\in \mathbb P_7$ by acting as the identity on the qubits
not indexed by $i$.
We recall the Hamming-matrix
\begin{equation}
	H_{[7,4,3]}=\begin{pmatrix}
		0 & 0 & 0 & 1 & 1 & 1 & 1 \\
		0 & 1 & 1 & 0 & 0 & 1 & 1 \\
		1 & 0 & 1 & 0 & 1 & 0 & 1
	\end{pmatrix}
\end{equation}
associated with the Steane code (for X and Z errors separately) and use linear combinations over $\ftwo$ of the rows to construct the invertible matrix (over $\mathbb Q$)
\begin{equation}
	D=\left(
	\begin{array}{ccccccc}
		0 & 0 & 0 & 1 & 1 & 1 & 1 \\
		0 & 1 & 1 & 0 & 0 & 1 & 1 \\
		0 & 1 & 1 & 1 & 1 & 0 & 0 \\
		1 & 0 & 1 & 0 & 1 & 0 & 1 \\
		1 & 0 & 1 & 1 & 0 & 1 & 0 \\
		1 & 1 & 0 & 0 & 1 & 1 & 0 \\
		1 & 1 & 0 & 1 & 0 & 0 & 1 \\
	\end{array}
	\right),
\end{equation}
where the binary representation of the row-index determines the selection from $H$.
The expectation values of $X$-operators are then related to the ones of the stabilizer elements via $D$. Under the assumption that the latter values are positive, this takes the form of a uniquely solvable linear equation among the logarithms:
\begin{equation}
	\begin{aligned}
	&\ln  (\langle X_1 \rangle, \langle X_2 \rangle, \langle X_3 \rangle, \langle X_4 \rangle, \langle X_5 \rangle, \langle X_6 \rangle, \langle X_7 \rangle )^T\\
	 =& D^{-1} \ln(\langle S_1 \rangle, \langle S_2 \rangle,\langle S_1 S_2 \rangle, \langle S_3 \rangle, \langle S_1 S_3 \rangle, \langle S_2 S_3 \rangle, \langle S_1 S_2 S_3 \rangle)^T.
	\end{aligned}
\end{equation}
Similar equations hold for $Z$ and $Y$ with $(S_1, S_2, S_3)$ being replaced by $(S_4, S_5, S_6)$ respectively $(S_1S_4, S_2S_5, S_3S_6)$.
Then for the $i$-th qubit we obtain the single qubit error rates
\begin{equation}
	\begin{aligned}
		P_i(I) = \frac{1}{4}(1 + \langle X_i \rangle +\langle Y_i \rangle+\langle Z_i \rangle )\\
		P_i(X) = \frac{1}{4}(1 + \langle X_i \rangle -\langle Y_i \rangle-\langle Z_i \rangle )\\
		P_i(Y) = \frac{1}{4}(1 - \langle X_i \rangle +\langle Y_i \rangle-\langle Z_i \rangle )\\
		P_i(Z) = \frac{1}{4}(1 - \langle X_i \rangle -\langle Y_i \rangle+\langle Z_i \rangle )
	\end{aligned} 
\end{equation}
and have thus reconstructed $P$.

When the assumption of independent noise on each qubit is not fulfilled, one can of course still form the `probability' vector
\begin{equation}
	\vec{p}(\langle S(0)\rangle ,\langle S(1)\rangle ,\ldots, \langle S(63)\rangle ) = \otimes_{i=1}^7 (1-P_i(X)-P_i(Y)-P_i(Y),P_i(X),P_i(Y),P_i(Z)) \in \mathbb R^{4^7}
	\label{eq:reconstructedP}
\end{equation}
given by the Kronecker product,
as long as the values on the right-hand side above are defined and with the caveat that the $4$-tuples might not lie in the probability simplex. The effect is small in this example and we chose to fix it by clamping the values to the interval [0, 1] and renormalizing the distribution. Here we added the arguments to the vector $\vec{p}$ to emphasize that the estimation of the Pauli channel is a function of the expectation values of the stabilizer elements (actually only the 21 specific ones appearing above).

\end{document}